\newtheorem{theorem}{Theorem}[section]
\newtheorem{lemma}[theorem]{Lemma}
\newtheorem{claim}[theorem]{Claim}
\newtheorem{observation}[theorem]{Observation}
\newtheorem{remark}[theorem]{Remark}
\newtheorem{definition}[theorem]{Definition}
\newtheorem{question}[theorem]{Question}
\newcommand{\eps}{\varepsilon}
\newcommand\mst{\mathrm{MST}}
\definecolor{BrickRed}{rgb}{0.8, 0.25, 0.33}
\def\EMPH#1{\emph{\textcolor{BrickRed}{#1}}}
\newcommand{\lbl}[1]{\mathfrak{#1}}
\title{Optimal Bounds for Spanners and Tree Covers in Doubling Metrics}
\date{}
\author{%
An La%
\thanks{Manning CICS, UMass Amherst. Email: {\tt anla@umass.edu}}
\and 
Hung Le%
\thanks{Manning CICS, UMass Amherst. Email: {\tt hungle@cs.umass.edu}}  
\and
Shay Solomon%
\thanks{Tel Aviv University. Email: {\tt shayso@tauex.tau.ac.il}}  
\and
Cuong Than%
\thanks{Manning CICS, UMass Amherst. Email: {\tt cthan@cs.umass.edu}}  
\and
Vinayak%
\thanks{Manning CICS, UMass Amherst. Email: {\tt vvinayak@umass.edu}}
\and
Shuang Yang%
\thanks{Manning CICS, UMass Amherst. Email: {\tt shuangyang@umass.edu}}
\and
Tianyi Zhang%
\thanks{State Key Laboratory for Novel Software Technology, Nanjing University, Nanjing, China. Email: {\tt tianyi.zhang@inf.ethz.ch}}
}
\begin{document}

\maketitle

\begin{abstract} 
    It is known that any $n$-point set in the $d$-dimensional Euclidean space $\mathbb{R}^d$, for $d = O(1)$, admits:
    \begin{enumerate}
        \item A $(1+\eps)$-spanner with 
    maximum degree
$\tilde{O}(\eps^{-d+1})$ and with lightness $\tilde{O}(\eps^{-d})$, for any  $\eps > 0$.\footnote{The {\em lightness} is a normalized notion of weight, where we divide the spanner weight by the weight of a minimum spanning tree. Here and throughout, the $\tilde{O}$ and $\tilde{\Omega}$ notations hide $\texttt{polylog}(\eps^{-1})$ terms.} 
\item A $(1+\eps)$-tree cover with $\tilde{O}(n \cdot \eps^{-d+1})$ trees and maximum degree of $O(1)$ in each tree.
\end{enumerate}
Moreover, all the parameters in these constructions are optimal:
For any $2 \le d = O(1)$,
there exists an $n$-point set in $\mathbb{R}^d$,  for which any $(1+\eps)$-spanner has $\tilde{\Omega}(n \cdot \eps^{-d+1})$ edges and lightness $\tilde{\Omega}(\eps^{-d})$.
The upper bounds for Euclidean spanners rely heavily on the spatial property of {\em cone partitioning} in $\mathbb{R}^d$, which does not seem to extend to the wider family of {\em doubling metrics}, i.e., metric spaces of constant {\em doubling dimension}. 
In doubling metrics, a  {\bf simple spanner construction from two decades ago, 
the {\em net-tree spanner}}, has $\tilde{O}(n \cdot \eps^{-d})$ edges, and it could be transformed into a spanner of maximum degree $\tilde{O}(\eps^{-d})$ and 
lightness 
$\tilde{O}(n \cdot \eps^{-(d+1)})$ by pruning redundant edges. Moreover, a careful refinement of the net-tree spanner yields a $(1+\eps)$-tree cover with $\tilde{O}(\eps^{-d})$ trees.
Despite a large body of work,
the problem of obtaining 
tight bounds for spanners and tree covers
 in the wider family of {\em doubling metrics} has remained elusive.
We resolve this problem by presenting:
\begin{enumerate}
    \item A surprisingly simple and tight lower bound, which shows  that the net-tree spanner and its pruned version are optimal with respect to all the involved parameters.
    \item A new construction of $(1+\eps)$-tree covers with $\tilde{O}(n \cdot \eps^{-d})$ trees, with maximum degree  $O(1)$ in each tree. This construction is optimal with respect to the number of trees and  maximum degree.  
\end{enumerate}

\end{abstract} 

\section{Introduction}
\label{sec:intro}

\subsection{Low-Dimensional Euclidean Spaces}
\subsubsection{Euclidean Spanners}
Let $P$ be a set of $n$ points in the Euclidean space $\mathbb R^d, d \ge 2$, and consider the complete weighted graph $G_P = (P,{P \choose 2},\|\cdot\|)$ induced by $P$,
where the weight of any edge $(x,y) \in {P \choose 2}$ is  the Euclidean distance $\|xy\|$ between its endpoints.
We say that a spanning subgraph  $H = (P,E,\|\cdot\|)$ of $G_P$ (with $E \subseteq {P \choose 2}$) is a \emph{$t$-spanner} for  $P$, for a parameter $t \ge 1$ that is called the {\em stretch} of the spanner, if 
$d_H(x,y) \le t \cdot \|xy\|$ holds
$\forall x,y \in P$.
Spanners for Euclidean spaces, or {\em Euclidean spanners},
were introduced
in the pioneering work of Chew~\cite{Chew86} from 1986, which gave an $O(1)$-spanner with $O(n)$ edges.
The first constructions of Euclidean $(1+\eps)$-spanners, for any parameter $\eps > 0$, were given in the seminal works of \cite{Clarkson87,Keil88,KG92}) that introduced the {\em $\Theta$-graph} in 2 and 3-dimensional Euclidean spaces,
which was   generalized for any Euclidean space $\mathbb R^d$ in \cite{RS91,ADDJS93}.
The $\Theta$-graph is a natural variant of the {\em Yao graph},  introduced  by Yao~\cite{Yao82} in 1982, and can be described as follows.

\begin{tcolorbox}[colback=gray!10, colframe=gray!40]
{\bf Yao graph:} $\forall p \in P$, the space $\mathbb R^d$ around $p$ is partitioned into {\em cones} of angle $\Theta$ each ($O(\Theta^{-d+1}$) cones for $d = O(1)$),
and then edges are added between point $p$ and its closest point in each of these cones.  
\end{tcolorbox}

The $\Theta$-graph is defined similarly to the Yao graph: instead of connecting $p$ to its closest point in each cone, connect it to a point
whose {\em orthogonal projection} to some fixed ray contained in the cone is closest to $p$. 
Taking $\Theta$ to be $c  \eps$, for small enough constant $c$, one can show that the stretch of the $\Theta$ and Yao graphs is at most $1+\eps$.
Since the number of cones is asymptotically $\eps^{-d+1}$ (for $d = O(1)$), the {\em maximum degree} is $O(\eps^{-d + 1})$, leading to a total of $O(n \cdot \eps^{-d+1})$ edges.

The tradeoff  between stretch $1+\eps$ and $O(n \cdot \eps^{-d+1})$ edges (and maximum degree $O(\eps^{-d + 1})$) is also achieved by other constructions,
including the {\em greedy spanner}~\cite{ADDJS93,CDNS92,NS07} and the
gap-greedy spanner~\cite{Salowe92,AS97}. 
The spatial cone partitioning of $\mathbb{R}^d$ is key to attaining the size bound  $O(n \cdot \eps^{-d+1})$ in  these constructions, either in the constructions themselves or in their analysis. In 2019, Le and Solomon \cite{LS19} showed that this stretch-size tradeoff is {\em existentially tight}: For any constant $d \geq 2$, there exists an $n$-point set in $\mathbb{R}^d$ (basically a set of evenly spaced points on the $d$-dimensional sphere), for which any $(1+\eps)$-spanner has $\Omega(n \cdot \eps^{-d+1})$ edges.

{\bf Light spanners.~} Another basic property of spanners is {\em lightness},
defined as the ratio of a spanner's total weight to the weight of the Minimum Spanning Tree of $P$. 
A long line of work 
\cite{ADDJS93,DHN93,DNS95,RS98,NS07,BLW19,LS19}, starting from the paper of Das et al.~\cite{DHN93} in 1993,
showed that for any point set in $\mathbb{R}^d$, the greedy $(1+\eps)$-spanner
of \cite{ADDJS93}
has constant (depending on $\eps$ and $d$) lightness. 
The exact dependencies on $\eps$ and $d$ in the lightness bound were not explicated in \cite{ADDJS93,DHN93,DNS95}.
In their seminal work on approximating TSP in $\mathbb{R}^d$ using light spanners, Rao and Smith~\cite{RS98} showed that the greedy spanner has lightness $\eps^{-O(d)}$, and they raised the question of determining the exact constant hiding in the exponent of their upper bound.
The proofs in \cite{ADDJS93,DHN93,DNS95,RS98} were incomplete; the first complete proof was given in \cite{NS07}, where a lightness bound of $O(\eps^{-2d})$ was established.
This line of work culminated with the work of Le and Solomon \cite{LS19}, which
improved the lightness bound to
$\tilde{O}(\eps^{-d})$, where we shall use the $\tilde{O}$ and $\tilde{\Omega}$ notations to suppress  \texttt{polylog}$(\eps^{-1})$ terms.  The exact lightness bound here is
$O(\eps^{-d} \cdot \log(1/\eps))$, but for the sake of brevity we will mostly disregard \texttt{polylog}$(\eps^{-1})$ terms from now on.
They
  \cite{LS19} also showed that this stretch-lightness tradeoff is {\em existentially tight} (up to a $\log(\eps^{-1})$ factor): for any constant $d \geq 2$, there exists an $n$-point set in $\mathbb{R}^d$ (the same set of evenly spaced points on the $d$-dimensional sphere), for which any $(1+\eps)$-spanner has lightness $\Omega(\eps^{-d})$.

\subsubsection{Euclidean Tree Covers}

A more structured variant of spanners is the {\em tree cover}, defined as a collection $\mathcal{T}$ of trees for a given graph or metric space such that, for every pair of vertices, there exists a tree in $\mathcal{T}$ that preserves their distance up to a given stretch factor, without shortening distances.
Due to their strong structural properties, tree covers play a pivotal role in routing, (path-reporting) distance oracles, and various other algorithmic applications. 
The first tree cover construction for a graph or metric family was given by Arya et al.~\cite{ADMSS95}---known as the ``Dumbbell Theorem''---asserting that any low-dimensional Euclidean space admits a $(1 + \eps)$-stretch tree cover of size $O(\eps^{-d} \cdot \log(1/\eps))$, for any $\eps > 0$. 
The lower bound by Le and Solomon~\cite{LS22} on the size of spanners directly implies that any $(1+\eps)$-stretch tree cover must have size at least $\Omega(\eps^{-d + 1})$. 
Recently, Chang et al.~\cite{CCL+24b} closed this longstanding gap between the upper and lower bounds on the size of tree covers, up to a $\log(1/\eps)$ factor, by improving the size upper bound to $O(\eps^{-d + 1} \cdot \log(1/\eps))$. 
Another important quality measure of tree covers is the maximum degree of a vertex in any tree; in particular, a constant degree tree cover enables compact routing schemes with small routing tables~\cite{CCL+24b}. 
Chang et al.\ constructed a constant degree tree cover without increasing the size beyond $O(\eps^{-d + 1}\cdot \log(1/\eps))$. 
Whether the $\log(1/\eps)$ factor separating this upper bound from the lower bound is necessary remains an open problem.

\subsection{Doubling Metrics}

\subsubsection{Doubling Spanners}
Euclidean spanners have been extensively studied over the years \cite{Keil88,KG92,ADDJS93,DN94,ADMSS95,DNS95,AS97,RS98,GLN02,
GGN04,AWY05,CG06,GR082,CGMZ16,DES08,Sol14,ES15,LS19},
with a plethora of applications, such as in geometric approximation algorithms \cite{RS98,GLNS02,GLNS02b,GLNS08}, geometric distance oracles
\cite{GLNS02,GLNS02b,GNS05,GLNS08}, network design \cite{HP00,MP00} and machine learning \cite{GKK17}.
(See the book \cite{NS07} for an excellent account on  Euclidean spanners and their applications.)
There is a growing body of work on {\em doubling spanners}, i.e., spanners for the wider family of  {\em doubling metrics};\footnote{The {\em doubling dimension} of a metric space is the smallest value $d$ such that every ball $B$ in the metric can be covered by at most $2^d$ balls of half the radius of $B$; a metric space is called {\em doubling} if its doubling dimension is constant.
The doubling dimension generalizes the standard Euclidean dimension, as the doubling dimension of the Euclidean space $\mathbb{R}^d$ is $\Theta(d)$.} 
see \cite{GGN04, CGMZ05, CG06, HPM06, GR081, GR082, Smid09, CLN12a, ES15, CLNS13, Sol14, BLW19, LT22, KLMS22, LST23}, and the references therein. {\bf A common theme in this line of work is to  devise constructions of spanners for doubling metrics that are just as good as the analog Euclidean spanner constructions}. Alas, this may not always be possible, as doubling metrics do not possess the spatial properties of Euclidean spaces---and in particular the spatial property of {\em cone partitioning}, which is key to achieving the aforementioned stretch-size and stretch-lightness upper bounds. Despite this shortcoming, the basic {\em packing bound}   in doubling metrics can be used to construct, via a simple greedy procedure, a {\em hierarchy of nets}, which induces the so-called {\em net-tree} \cite{HPM06, GGN04, CGMZ05}.
(See \Cref{sec:prel} for the packing bound and other definitions.)
Equipped with such a hierarchy of nets, a $(1+\eps)$-spanner with $\tilde{O}(n \cdot \eps^{-d})$ edges is constructed as follows \cite{HPM06, GGN04, CGMZ05}. 

\begin{tcolorbox}[colback=gray!10, colframe=gray!40] {\bf Net-tree spanner~\cite{HPM06, GGN04, CGMZ05}:} Let $X = N_0\supseteq N_1\supseteq N_2 \supseteq\ldots \supseteq N_{\ceil{\log\Delta}}$ be a hierarchy of nets of a doubling metric $(X,d_X)$ with minimum distance $1$ and maximum distance $\Delta$, where $N_i$ is a $2^{i}$-net for $N_{i-1}$, $1\leq i \leq \ceil{\log\Delta}$. \\Let $E_{-1} = \emptyset$ and for each $0\leq i \leq \ceil{\log\Delta}$: 
\begin{equation}
    E_i = \left\{(x,y)~|~ x,y\in N_i, d_X(x,y) \leq \left(4+ \frac{32}{\eps}\right)2^i\right\}\setminus E_{i-1}
\end{equation}
Then $(X,\cup_{0\leq i\leq \ceil{\log\Delta}}E_i,d_X)$ is a $(1+\eps)$-spanner of $(X,d_X)$.
\end{tcolorbox}

{\bf The net-tree spanner is a simple and basic spanner construction from over 20 years ago}, and it provides the state-of-the-art size bound, $O(n \cdot \eps^{-d} \cdot \log(1/\eps))$. 
Le and Solomon \cite{LS23} presented a unified framework for transforming sparse spanners into light spanners by carefully pruning redundant edges. In particular, for doubling metrics, the framework of \cite{LS23} provides a pruned spanner, with a lightness bound 
of $\tilde{O}(\eps^{-(d+1)})$, which is the state-of-the-art lightness bound for doubling metrics.
Interestingly, these  upper bounds on the size and lightness in doubling metrics exceed the respective Euclidean bounds  by a factor of $\eps^{-1}$. 
Despite a large body of work in the area, 
it has remained a longstanding open question whether the superior Euclidean bounds of $\tilde{O}(n \cdot \eps^{-d+1})$ edges and lightness $\tilde{O}(\eps^{-d})$
could be achieved also in doubling metrics.

\begin{question}
\label{ques:eucli-doub-match}
Can one get a construction of $(1+\eps)$-spanners in doubling metrics with $\tilde{O}(n \cdot \eps^{-d+1})$ edges and/or lightness $\tilde{O}(\eps^{-d})$? 
\end{question}

This question is related to a possibly deeper question, regarding the (im)possibility to generalize the spatial cone partitioning in $\mathbb{R}^d$ to arbitrary doubling metrics. 

\subsubsection{Doubling Tree Covers}

Bartal et al.~\cite{BFN19} constructed a tree cover of size $O(\varepsilon^{-d} \cdot \log(1/\varepsilon))$ using the net-tree by redistributing cross edges at each level into color classes. However, in their construction, a vertex in one of the trees may have many children, resulting in unbounded degree, potentially as large as $\log(\Delta)$, where $\Delta$ is the aspect ratio. Motivated by the Euclidean setting, it is natural to ask whether one can obtain a tree cover with bounded degree in doubling metrics.

\begin{question}
\label{ques:bound-deg-cover}
Can one construct a $(1+\varepsilon)$-tree cover in doubling metrics whose maximum degree is an absolute constant, i.e., $O(1)$, while having size $\tilde{O}(\varepsilon^{-d})$?
\end{question}

\subsection{Our contribution}

We give a negative answer to \Cref{ques:eucli-doub-match} by a {\bf surprisingly simple} and tight lower bound.

\begin{restatable}{theorem}{LowerBoundMain}
\label{thm:main}
For any integer constant $d \ge 1$, parameter $\varepsilon \in (0,1)$, and $n \in \mathbb{Z}^+$ satisfying $\varepsilon^{-d} = O(n)$, there exists an $n$-point ultrametric space $(X, d_X)$ of doubling dimension $d$ such that any $(1+\varepsilon)$-spanner $G = (X, E, w)$ for $X$ has $\Omega(n \cdot \varepsilon^{-d})$ edges and lightness $\tilde{\Omega}(\varepsilon^{-(d+1)})$.
\end{restatable}
Since the net-tree spanner has sparsity \footnote{The \emph{sparsity} of a spanner $H$ is the ratio between number of edges in $H$ over $n - 1$, for $n > 1$.} $O(\varepsilon^{-d})$~\cite{GGN04}, our lower bound shows, in particular, that \textbf{the net-tree spanner is optimal}; this is our {\bf key conceptual contribution}.

 Second, {\bf as our main technical contribution}, we show that a tree cover with constant degree and nearly optimal size is achievable, resolving \Cref{ques:bound-deg-cover} affirmatively.

\begin{restatable}{theorem}{TreeCoverBoundedDegree}
\label{thm:tree-cover}
For any metric $(X, d_X)$ with doubling dimension $d$, parameter $0 < \varepsilon < 1$, there exists a $(1+\varepsilon)$-stretch tree cover $\mathcal{T}$ of size $\tilde{O}\!\left(\varepsilon^{-d}\right)$,
where each tree in $\mathcal{T}$ has maximum vertex degree $O(1)$.
\end{restatable}

To complement our result, we show (in~\Cref{sec:lowerbound_maxdeg}) that it is impossible to eliminate the $\mathsf{polylog}(1/\varepsilon)$ factor from the size of any constant-degree tree cover.
\begin{restatable}{theorem}{LowerBoundMaxDegree}
\label{thm:bdd-deg}
For any integer constant $d \ge 1$, any parameter $\varepsilon \in (0,1)$, and any $n \in \mathbb{Z}^+$ satisfying $\varepsilon^{-d} = O(n)$, there exists an $n$-point ultrametric space $(X, d_X)$ of doubling dimension $d$ such that any $(1+\varepsilon)$-spanner $G = (X, E, w)$ for $X$ has maximum degree $\Omega(\log(1/\varepsilon) \cdot \varepsilon^{-d})$.
\end{restatable}
\Cref{thm:bdd-deg} demonstrates an inherent multiplicative gap of $\log(1/\varepsilon)$ between the maximum degree and the optimal sparsity of $(1+\eps)$-spanners in doubling metrics, a gap that does not appear in Euclidean spaces. Although \Cref{thm:bdd-deg} is not the main focus of our work, it provides an interesting separation between Euclidean and doubling metrics. Moreover, it shows that the $\mathsf{polylog}(1/\varepsilon)$ term in \Cref{thm:tree-cover} (hidden via the $\tilde{O}$-notation) is unavoidable in doubling metrics, whereas its necessity in Euclidean spaces remains an open question.

\subsection{Technical Overview}
Our tree cover construction
provides the key technical contribution of this work. In what follows, we provide a  technical overview of this construction. 

Our tree cover construction relies on a hierarchical structure---specifically, the standard {\em net-tree}---and builds a tree cover so that every cross edge $e$ (i.e., every edge in $E_i$) is contained in some tree. For every pair of points $(u,v)$, there exists a cross edge $(u',v')$ such that the path from $u$ to $v$ that goes through $(u',v')$ provides a good approximation of $d_X(u,v)$. 
The distance between $u$ and $v$ is then preserved by the tree $T$ that contains $(u',v')$; in this case, we say that $(u, v)$ is covered by $T$.

The construction proceeds in two phases. In Phase~1, we partition the cross edges into congruent classes according to their levels modulo $\log(1/\varepsilon)$, and handle each class independently. For each class, we distribute the cross edges into each tree in our tree cover. Note that after Phase~1, we already obtain a tree cover of stretch $(1 + \eps)$ and size $\tilde{O}(\eps^{-d})$. In Phase~2, for a fixed level, we replace edges in a controlled manner to obtain the required bounded degree property, without significantly increasing the stretch. The desired $(1+\varepsilon)$ stretch bound can then be recovered by a simple scaling argument.

Our Phase~2 (\Cref{sec:treecover_bounded_deg} and~\Cref{sec:treecover_bdd_deg_appendix} in detail) is inspired by the degree-reduction procedure of \cite{CCL+24b}. To achieve constant degree, we proceed in two steps. First, using the technique in \cite{CGMZ05}, we reroute the edges in each tree $T$ in the cover produced by Phase~1, obtaining trees of maximum degree $\varepsilon^{-O(d)}$, while the stretch increases only to $1+O(\varepsilon)$. Second, we use the rerouting technique in \cite{CCL+24b} to obtain absolute constant maximum degree. For each vertex $u$ in each such tree $T$, we replace the edges connecting $u$ to its children by a binary tree. Suppose that $l$ is the maximum distance from (the representative of) $u$ to its children, this step incurs an additive distortion of $\Theta(\log(\varepsilon^{-d})\cdot l)$ for every pair of points covered by $T$.

A key technical challenge in integrating Phase~2 into Phase~1 is that the edges replaced by binary trees must be sufficiently short relative to the cross edges at the same level. Otherwise, the additive distortion for a pair $(u,v)$ might become comparable to $d(u,v)$ itself, producing a stretch larger than~2. In the construction of \cite{BFN19}, a point $u$ in a tree $T$ may be incident to $\varepsilon^{-d}$ cross edges of nearly equal length; applying Phase~2 on top of \cite{BFN19} would therefore not yield a tree cover with good stretch.

Our first idea is to ensure that, in Phase~1, for every level we partition the cross edges into $O(\varepsilon^{-d})$ sets such that in each tree of the resulting cover, every point $u$ is incident to \emph{at most one} edge from each set. This motivates partitioning $E_i$ into sets of \emph{matchings}.

We implement Phase~1 as follows. We initialize tree cover $\mathcal{T}$ with $\tilde{O}(\varepsilon^{-d})$ empty forests. For each congruent class modulo $\log(1/\varepsilon)$, we process the levels in increasing order. For each level $i$, we partition $E_i$ into sets of matchings $\mathcal{M}_i$. For each matching $M \in \mathcal{M}_i$, we assign $M$ to forest $F \in \mathcal{T}$, and add all edges in $M$ to $F$. We continue this process until the top level.

A second challenge is ensuring that the graph $F$ remains a forest after inserting the edges of $M$. Specifically, we guarantee that for any two edges $(u,v)$ and $(u',v')$ in $M$, there is no tree in the forest $F$ containing both $u$ and $u'$. This leads to our second idea: each matching $M$ must be \emph{pairwise far}. That is, for every $M\in\mathcal{M}_i$ and every two distinct edges $e,e'\in M$, the distance between (any two endpoints of) $e$ and $e'$ must be at least $\Theta(2^i)$. Under this condition, an inductive argument shows that every tree in each forest $F$ has small diameter ($< 2^i$); thus, after connecting each tree to an edge in $M$, a sufficient separation still remains, ensuring that the resulting graph is indeed a forest. Such matchings, also known as {\em red–blue matchings}, were also used by \cite{FL22, LL25} in different contexts.

\section{Preliminaries~} \label{sec:prel}
Let $(X, d_X)$ be a metric space. 
The {\em aspect ratio} $\Delta$ of $X$ is the ratio of its largest to smallest pairwise distances. The following definitions are crucial for our construction and proofs.

\begin{definition}[$r$-net]
    For a metric space $(X, d_X)$, an $r$-net $N\subseteq X$ is a set satisfying:
    \begin{enumerate}
        \item \textbf{[Packing]} for every distinct $u,v \in N$, $d_X(u, v) > r$
        \item \textbf{[Covering]} for every $x\in X$, there exists $u\in N$ such that $d_X(u, x) \leq r$.
    \end{enumerate}
\end{definition}

\begin{definition}[$(\alpha, \beta)$-Net Tree]\label{def:alpha-beta-net-tree}
    An $(\alpha, \beta)$-net tree $\tau$ is induced by a hierarchy of nets $N_{-1}=X, N_0, N_1,\ldots, N_{\log\Delta}$ at levels $-1, 0, 1,\ldots, \log\Delta$ of $\tau$ resp., satisfying the following:
    \begin{enumerate}
        \item (Hierarchy) $N_i\subseteq N_{i-1}$ for every $i\in[\log\Delta]$.
        \item (Net) For every $i\in \{0,1,
        \ldots, \log\Delta\}$, $N_i$ is an $(\alpha\cdot\beta^i)$-net of $N_{i-1}$.
    \end{enumerate}
\end{definition}

The following lemma gives the basic packing bound of doubling metrics.
\begin{lemma}[Packing Lemma] 
\label{packing_lemma}\label{lm:packing}
    Metric $(X, d_X)$ has doubling dimension $d$.
    If $S \subseteq X$ is a subset with pairwise distance at least $r$ that is contained in a ball of radius $R$, then $|S| \leq \left(\frac{4R}{r}\right)^{d}$.
\end{lemma}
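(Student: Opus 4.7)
The plan is to iteratively apply the defining property of doubling dimension: every ball of radius $\rho$ in $X$ can be covered by at most $2^d$ balls of radius $\rho/2$. Starting from the ball of radius $R$ that contains $S$, one application yields a cover by $2^d$ balls of radius $R/2$; applying the property again to each of these yields $2^{2d}$ balls of radius $R/4$; and so on. After $k$ such halvings, $S$ is contained in the union of at most $2^{dk}$ balls each of radius $R/2^k$.

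Next I would choose $k$ to be the smallest non-negative integer for which $R/2^k < r/2$, so that every ball in the final cover has diameter strictly less than $r$. Since all pairwise distances in $S$ are at least $r$, each such ball contains at most one point of $S$, and therefore $|S|\le 2^{dk}$. The minimality of $k$ forces $R/2^{k-1}\ge r/2$, i.e., $2^{k-1}\le 2R/r$, and hence $2^k\le 4R/r$. Combining these inequalities gives
\[
  |S| \;\le\; 2^{dk} \;\le\; \left(\frac{4R}{r}\right)^{d},
\]
which is exactly the claimed bound. The trivial boundary case $k=0$ (i.e., $R<r/2$) is subsumed, since the enclosing ball already has diameter strictly less than $r$, forcing $|S|\le 1$.

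There is no substantive obstacle here: the statement is a direct consequence of recursive halving under the doubling property. The only mild subtlety is that $k$ must be chosen so that the final radius is \emph{strictly} below $r/2$ rather than merely $\le r/2$, in order to rule out two points of $S$ lying at distance exactly $r$ inside a common ball; the factor of $4$ in the bound $4R/r$ (rather than the naive $2R/r$) is precisely what absorbs this extra slack.
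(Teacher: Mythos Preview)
Your argument is correct and is the standard proof of the packing lemma: recursively halve the radius using the doubling property, then pick $k$ so that the final balls have diameter below $r$. The paper does not supply its own proof of this lemma---it is stated in the preliminaries as a basic known fact---so there is nothing to compare against; your write-up would serve perfectly well as the omitted justification.
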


\begin{definition}[Ultrametric]
    An {\em ultrametric} $(X, d_X)$ is a metric space that satisfies the strong triangle inequality, i.e., for any $u, v, w\in X$, we have $d_X(u, w)\leq \max\{d_X(u, v), d_X(v, w)\}$.
\end{definition}

\begin{definition}[$k$-HST] \label{hst}
    Let $k >1$ be a fixed constant. A metric space $(X, d_X)$ is a {\em $k$-hierarchical well-separated tree ($k$-HST)} if there exists a bijection $\phi$ from $X$ to leaves of a rooted tree $T$ in which:
    \begin{enumerate}
        \item Each node $v\in T$ is associated with a label $\Gamma(v)$ such that $\Gamma(v)=0$ if $v$ is a leaf, and $\Gamma(v)\geq k \cdot \Gamma(u)$ \textnormal{($\Gamma(v) > 0$)} if $v$ is an internal node and $u$ is a child of $v$.
        \item $d_X(x, y) = \Gamma(\texttt{LCA}(\phi(x), \phi(y)))$, where $\texttt{LCA}(u, v)$ is the lowest common ancestor of $u, v$.
    \end{enumerate}
\end{definition}

Lastly, we defer some proofs of lemmas/claims (marked by $(\star)$) to~\Cref{sec:omitted_proofs}.

\section{Sparsity and Lightness Lower Bounds for $(1+\varepsilon)$-Spanners.~}
\label{sec:lower_bound}

This section is devoted to the proof of \Cref{thm:main}. We state it again below:
\LowerBoundMain* 

We begin by constructing an ultrametric space $(X,d_X)$ for which we later prove that any $(1+\eps)$-spanner must satisfy the size and lightness lower bound. 

Without loss of generality, assume that $n$ is a power of $2^d$. More specifically, if $n$ is not a power of $2^d$, let $\tilde{n} = (2^d)^\ell$ be the largest integer power of $2^d$ before $n$, with $\tilde{n} \ge n/2^d$, and apply the argument to $\tilde{n}$ instead of $n$. The size lower bound will decrease by at most an $O(1)$ factor $2^d$ and the lightness lower bound will remain the same.

We define the following 2-HST $(X, d_X)$.
Let $T$ be a rooted perfect $2^d$-ary tree with $n$ leaf nodes: all leaf nodes are at the same depth $\log_{2^d} n = \frac{\log_2 n}{d}$ and each internal node has $2^d$ children. 
The {\em height} of a node is its distance to the closest leaf, so the leaf nodes have height 0, their parents have height 1, and so on, until reaching the root, at height $\frac{\log_2 n}{d}$.
For each node $v$ in the tree we assign a label, denoted by $\Gamma(v)$.
Each leaf node is assigned label 0, and each node of height $i = 1,\ldots,\frac{\log_2 n}{d}$ is assigned label $2^i$.
For any $x, y \in T$, we denote the {\em lowest common ancestor} of $x$ and $y$ by $\texttt{\texttt{LCA}}(x, y)$. 
Let $\psi$ be any arbitrary bijection mapping $X$ to the leaf nodes of $T$. For any $u, v \in X$, we define $d_X(u, v)$ to be $\Gamma(\texttt{\texttt{LCA}}(\psi(u),\psi(v)))$.

We start with the following basic observation in \Cref{clm:bdd-dd}. Although to the best of our knowledge, this precise statement does not appear in previous work, weaker (though similar) versions of this statement can be found in~\cite{CG06,FL22}. 
\begin{claim}
\label{clm:bdd-dd}
$(X, d_X)$ is an ultrametric space of doubling dimension $d$. 
\end{claim}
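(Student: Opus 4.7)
The plan is to verify the two properties separately: the strong triangle inequality for ultrametricity, and matching upper and lower bounds for the doubling dimension. Both will follow by translating distances into structural properties of the tree $T$, since by construction $d_X$ is a label-of-LCA metric on the leaves of $T$.

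For ultrametricity, I will use the standard tree fact that among any three leaves $\psi(u),\psi(v),\psi(w)$, the three pairwise LCAs include a unique ``highest'' one that coincides with two of them, while the third LCA is a (weak) descendant of this highest one. Because $\Gamma$ is nondecreasing along root-paths (it is exactly $2^{\text{height}}$), the descendant LCA has label no larger than the ancestor LCA. Writing $a=\texttt{LCA}(\psi(u),\psi(v))$, $b=\texttt{LCA}(\psi(v),\psi(w))$, $c=\texttt{LCA}(\psi(u),\psi(w))$, and applying this to whichever pair coincides with the highest node, yields $\Gamma(c)\le \max\{\Gamma(a),\Gamma(b)\}$, which is precisely $d_X(u,w)\le \max\{d_X(u,v),d_X(v,w)\}$.

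For the doubling dimension, the key observation is that $(X,d_X)$-balls correspond exactly to subtrees of $T$. Concretely, for $R\ge 2$, let $i=\lfloor\log_2 R\rfloor\ge 1$; since the only positive distances realized are $2^1,2^2,\ldots,2^{\log_2 n/d}$, the condition $\Gamma(\texttt{LCA}(\psi(x),\psi(y)))\le R$ is equivalent to $\Gamma(\texttt{LCA})\le 2^i$, i.e., $\psi(y)$ lies in the subtree $T_{v_i}$ rooted at the height-$i$ ancestor $v_i$ of $\psi(x)$. Thus $|B(x,R)|=2^{id}$. For $R<2$, $B(x,R)=\{x\}$, and any single ball trivially suffices. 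To cover $B(x,R)$ for $R\ge 2$ by balls of radius $R/2$, I will pick one representative leaf in each of the $2^d$ subtrees rooted at the children of $v_i$ (which sit at height $i-1$); since $R/2\ge 2^{i-1}$, the ball of radius $R/2$ around such a representative is exactly the subtree rooted at that child. This gives a cover by $2^d$ balls, so the doubling dimension is at most $d$.

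For the matching lower bound that justifies calling the dimension exactly $d$, I will note that any ball of radius $R/2<2^i$ has size at most $2^{(i-1)d}$, again since every ball is the leaf set of some subtree. Hence covering $B(x,R)$, which has $2^{id}$ points, requires at least $2^{id}/2^{(i-1)d}=2^d$ balls of half-radius, ruling out any smaller dimension. There is no substantial obstacle here; the only care needed is in the boundary regime $R\in[2^i,2^{i+1})$ and the trivial regime $R<2$, where the discreteness of the label set makes the ball-to-subtree correspondence clean rather than delicate.
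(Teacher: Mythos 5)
Your proof is correct and takes essentially the same approach as the paper: ultrametricity via the fact that among three leaves the three pairwise LCAs contain an ancestor that ties with two of them, and the doubling bound via the observation that every ball is exactly the leaf set of a subtree, which is covered by the $2^d$ child subtrees. The only difference is that you also supply the matching lower bound (that $2^d$ half-radius balls are genuinely needed, since each has at most $2^{(i-1)d}$ points), whereas the paper only proves the upper bound on the doubling dimension; this extra direction is not required for \Cref{thm:main}, since a lower-bound instance with doubling dimension strictly less than $d$ would only make the theorem stronger, but it is a correct and slightly more complete justification of the claim as literally stated.
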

\begin{proof}
First, we argue that $(X, d_X)$ is an ultrametric space. By the definition of $d_X$, we have $d_X(u,v) = 0$ iff $u = v$. Further, $d_X(u,v) = d_X(v,u)$ holds for any $u,v \in X$ since $\texttt{LCA}(\psi(u),\psi(v))=\texttt{LCA}(\psi(v),\psi(u))$. 
It remains to show that 
$d_X(u,v) \leq \max\{d_X(u,w), d_X(w,v)\}$, for any $u,v,w\in X$.
Let $T_u$ ($T_v$) denote the subtree rooted at the child node of $\texttt{LCA}(\psi(v),\psi(u))$ that contains $\psi(u)$ ($\psi(v)$). Note that for any $w \in X$, $\psi(w)$ cannot belong to both $T_u$ and $T_v$, so either $\texttt{LCA}(\psi(v),\psi(w))$ is not contained in $T_v$, or $\texttt{LCA}(\psi(u),\psi(w))$ is not contained in $T_u$.
Thus, $d_X(u,v) \leq \max\{d_X(u,w), d_X(w,v)\}$ holds.

Second, we prove that $(X, d_X)$ has doubling dimension $d$. 

We first show that the doubling dimension is no smaller than $d$. Let $B(u, r)$ be a ball centered at $u \in X$ with radius $r \in \mathbb{R}^+$. Let $\psi(f_u)$ be the parent of $\psi(u)$ in $T$. Consider $r=2$. By the construction of $T$, every point $v\in X$ corresponding to a child of $\psi(f_u)$ satisfies $d_X(u,v) = \Gamma(\psi(u),\psi(v))$. Hence, $B(u,2)$ contains all points corresponding to the children of $\psi(f_u)$, and contains at least $2^d$ distinct points. On the other hand, for any $v\in X$, the ball $B(v,1)$ contains only the single point $v$. Hence, covering $B(u,2)$ requires at least $2^d$ balls of radius $1$. This implies that the doubling dimension is no smaller than $d$.

Then, we show that the doubling dimension of $(X, d_X)$ is no bigger than $d$. When $r < 2$, $B(u, r)$ contains only one point $u$ and is covered by $B(u, \frac{r}{2})$. When $2^i \le r < 2^{i + 1}$ for some integer $i \geq 1$, let $\omega$ be the ancestor of $\psi(u)$ of height $i$ in $T$ and let $T_{\omega}$ be the subtree rooted at $\omega$. By the construction of $T$, all pairwise distances between the points corresponding to the leaf nodes of $T_{\omega}$ are no larger than the label $2^i$ of $\omega$. On the other hand, the distance between $u$ and any point corresponding to a leaf node outside $T_{\omega}$ is no smaller than the label $2^{i+1}$ of the parent node of $\omega$, as their lowest common ancestor is outside $T_{\omega}$. Thus, $B(u, r)$ consists of the $(2^d)^i = 2^{d\cdot i}$ points that correspond to the leaf nodes of $T_{\omega}$. 
Let $L_j$ be the set of points corresponding to the leaves of the subtree rooted at the $j$th child of $\omega$, for each $j \in [2^d]$, and let $v_j$ be an arbitrary point in $L_j$. Since all children of $\omega$ have label $2^{i-1}$, all pairwise distances between the points in $L_j$ are no larger than $2^{i-1}$. So $B(v_j, \frac{r}{2}) \supseteq B(v_j,2^{i-1})$ covers all $(2^d)^{i-1} = 2^{d\cdot (i-1)}$ points in $L_j$, for each $j \in [2^d]$, implying that the ball $B(u,r)$ can be covered by the $2^{d}$ balls 
$B(v_j, \frac{r}{2}), j \in [2^d]$.
\end{proof}

Then, we demonstrate the following property of $(X,d_X)$ in \cref{cl:include_edges} that will be crucial for proving the size and lightness lower bounds. 
\begin{claim} \label{cl:include_edges}
    Let $G= (X,E,w)$ be an arbitrary $(1+\eps)$-spanner for $(X,d_X)$. For any point pair $u,v \in X$ such that $d_X(u,v) < \frac{2}{\eps}$,  edge $(u,v)$ must be in $G$.
\end{claim}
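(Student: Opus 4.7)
The plan is to argue by contradiction: assume some pair $u,v$ with $d_X(u,v) < 2/\eps$ is missing from $G$, and then show that any replacement path is too long to satisfy the $(1+\eps)$-spanner property.

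First I would record two structural facts about the metric $d_X$. (a) Since every internal node of $T$ has label at least $2^1 = 2$, any two distinct points in $X$ are at distance at least $2$. In particular, every edge of $G$ has weight $\geq 2$. (b) A standard induction on the strong triangle inequality shows that for any path $u = x_0, x_1, \ldots, x_k = v$ in $X$, one has
\[
d_X(u,v) \;\leq\; \max_{1\leq j \leq k} d_X(x_{j-1},x_j).
\]
Both facts are immediate from the ultrametric structure proven in the preceding claim.

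Now suppose toward a contradiction that $(u,v) \notin E$ while $d_X(u,v) < 2/\eps$. Let $u = x_0, x_1, \ldots, x_k = v$ be a shortest $u$-$v$ path in $G$; since the direct edge is absent we have $k \geq 2$. By fact (b), some edge $(x_{j-1},x_j)$ on this path has weight $d_X(x_{j-1},x_j) \geq d_X(u,v)$, and by fact (a) every remaining edge contributes at least $2$ to the path weight. Hence
\[
d_G(u,v) \;=\; \sum_{j=1}^{k} d_X(x_{j-1},x_j) \;\geq\; d_X(u,v) + 2(k-1) \;\geq\; d_X(u,v) + 2.
\]
Combining this with the $(1+\eps)$-spanner guarantee $d_G(u,v) \leq (1+\eps)\,d_X(u,v)$ yields $2 \leq \eps \cdot d_X(u,v)$, i.e., $d_X(u,v) \geq 2/\eps$, which contradicts the assumption.

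I do not anticipate any real obstacle here; the argument is three short lines once the two structural facts are in place. The only subtlety is to remember that (a) relies on the specific label assignment (smallest internal label equals $2$), which is why the threshold $2/\eps$ matches the $2$ in the numerator; a more general $k$-HST with smallest internal label $\lambda$ would give the analogous threshold $\lambda/\eps$, but the $2$-HST used here makes this explicit and is exactly what is needed later to obtain many forced short edges.
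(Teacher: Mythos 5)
Your proof is correct and follows essentially the same strategy as the paper's: bound the detour cost by observing that, on any replacement path, the ultrametric inequality forces some edge to have weight at least $d_X(u,v)$ while every other edge costs at least $2$, giving $d_G(u,v) \ge d_X(u,v)+2$, which is incompatible with stretch $1+\eps$ when $d_X(u,v) < 2/\eps$. The paper phrases this by picking a single intermediate vertex $w$ on the shortest path and applying the strong triangle inequality to the triple $u,w,v$ (writing the sum as $\max + \min$), whereas you run the same argument over the whole path via your fact (b); the content and the resulting bound are identical.
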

\begin{proof}
    Consider any pair $u,v \in X$. 
    We first argue that if edge $(u,v)$ is not in $G$, then $d_G(u,v) \ge d_X(u,v) + 2$. In this case, consider a shortest path from $u$ to $v$ in $G$, and note that it traverses an intermediate point $w \in X$. By the triangle inequality, $d_G(u,v) \ge d_G(u,w) + d_G(w,v) \ge d_X(u,w) + d_X(w,v)$. 
Since the minimum pairwise distance is $2$ and  $(X,d_X)$ is an ultrametric, 
$d_G(u,v) \ge d_X(u,w) + d_X(w,v) =
\max\{d_X(u,w),d_X(w,v)\}
+ \min\{d_X(u,w),d_X(w,v)\}
 \ge d_X(u,v) + 2$.

We conclude that if $d_X(u,v) < \frac{2}{\eps}$, then edge $(u,v)$ must be in $G$, otherwise by the above assertion we would get $d_G(u,v) \ge d_X(u,v) +2 > (1+\eps)d_X(u,v)$, which is a contradiction to $G$ being a $(1+\eps)$-spanner for $(X,d_X)$.
\end{proof}

In the following subsections, we apply \cref{cl:include_edges} to prove the size and lightness lower bounds of any $(1+\eps)$-spanner $G= (X,E,w)$ for $(X,d_X)$. 

\vspace{5pt}
\subparagraph{Size lower bound.~}

Let $D_{\max}$ be the largest label among nodes in $T$ that is strictly smaller than $\frac{2}{\eps}$. Since the labels in $T$ grow geometrically by a factor of $2$, there exists a label in $\bigl[\frac{1}{\eps},\,\frac{2}{\eps}\bigr)$, hence $\frac{1}{\eps}\le D_{\max} <\frac{2}{\eps}$. Let $i_{\text{max}} = \log(D_{max}) \ge \log(\frac{1}{\eps})$ be the height of   nodes with label $D_{max}$. Each subtree of height $i_{\text{max}}$ has
$(2^d)^{i_{\text{max}}} \ge \eps^{-d}$ leaf nodes (this is where the restriction $\eps^{-d} = O(n)$ comes into play) and the pairwise distances between the corresponding points are bounded by $D_{max} < \frac{2}{\eps}$, hence \Cref{cl:include_edges}
implies that the spanner $G$ connects all these points by a clique.
Thus each point in $X$ has a degree of at least $\eps^{-d}-1$ in $G$, and so $G$ contains at least  $n (\eps^{-d}-1)/2 = \Omega(n \cdot \eps^{-d})$ edges, which proves the size lower bound. 

\vspace{5pt}
\subparagraph{Lightness lower bound.~}
We begin by giving the lower bound on the weight of $G$. Consider an arbitrary point $u \in X$ and denote the subtree rooted at the ancestor of $\psi(u)$ at level $i_{\text{max}}$ as $T_{i_{\text{max}}}(u)$. We have shown that $u$ is connected to all 
$(2^d)^{i_{\text{max}}} \ge \eps^{-d}$
points corresponding to the leaf nodes in $T_{i_{\text{max}}}(u)$. Note also that only $(2^d)^{i_{\text{max}}-1}$ points among those may
belong to the child subtree of $T_{i_{\text{max}}}(u)$ that contains $\psi(u)$, while the remaining $(2^d)^{i_{\text{max}}} - (2^d)^{i_{\text{max}}-1}  \ge \eps^{-d}/2$ points belong to other child subtrees of $T_{i_{\text{max}}}(u)$, and are therefore at distance $D_{max} = 2^{i_{\text{max}}} \ge 1/\eps$ from $u$.
Thus the total weight of edges incident on any point $u \in X$ in $G$ is at least $\eps^{-d}/2 \cdot (1/\eps) = \eps^{-(d+1)}/2$, and so the weight of $G$ is at least $n \cdot \eps^{-(d+1)}/4
= \Omega(n \cdot \eps^{-(d+1)})$.

\Cref{cl:MST} concludes the proof of the lightness lower bound for $d \ge 2$. As for $d = 1$, \Cref{cl:MST} is not enough, since we need $w(\mst_X)$ to be $O(n \cdot \log(\eps^{-1}))$ rather than $O(n \log n)$ to obtain the required lightness bound; we will handle this issue after the proof of \Cref{cl:MST}.
\begin{claim} \label{cl:MST}
$w(\mst_X) = O(n)$ for any $d \ge 2$ and $w(\mst_X) = O(n \log n)$ for $d = 1$.
\end{claim}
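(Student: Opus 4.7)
The plan is to exhibit an explicit spanning tree of $(X, d_X)$ whose weight matches the required bounds; since $w(\mst_X)$ is at most the weight of any spanning tree, this will suffice.

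I will build the spanning tree $T^\ast$ by a bottom-up procedure on $T$. For every internal node $v$ of $T$ at height $i \ge 1$, pick one representative leaf from each of the $2^d$ subtrees rooted at $v$'s children, and add $2^d - 1$ edges connecting these $2^d$ representatives (for instance, a star centered at the first of them). By the HST property, any two leaves whose LCA is $v$ are at distance exactly $\Gamma(v) = 2^i$, so every edge added at node $v$ has weight $2^i$. A standard inductive argument shows that after processing all nodes at heights $1, \ldots, i$, the set of leaves of each height-$i$ subtree forms a single connected component; hence at the end of the procedure $T^\ast$ is connected on all $n$ leaves, and it is a spanning tree.

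To bound $w(T^\ast)$, I would sum contributions by level: there are $n/2^{di}$ internal nodes at height $i$, each contributing $(2^d - 1) \cdot 2^i$, so with $h = (\log_2 n)/d$,
\[
w(T^\ast) \;=\; (2^d - 1)\, n \sum_{i=1}^{h} 2^{i(1-d)}.
\]
For $d \ge 2$, the ratio $2^{1-d} \le 1/2$ makes this a convergent geometric series bounded by a constant depending only on $d$, giving $w(T^\ast) = O(n)$. For $d = 1$ the exponent vanishes and the sum equals $h = \log_2 n$, giving $w(T^\ast) = O(n \log n)$. In both cases $w(\mst_X) \le w(T^\ast)$ yields the claim.

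There is no substantial obstacle; the argument is essentially a direct calculation once the recursive construction is set up, and the only thing to be careful about is verifying that the merging at each level yields a tree (no cycles, and eventual connectivity), which follows immediately from the fact that at each height-$i$ node we attach exactly $2^d - 1$ edges between $2^d$ previously disjoint components.
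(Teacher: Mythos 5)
Your proof is correct and takes essentially the same approach as the paper: exhibit an explicit spanning tree, count its edges by weight level, and sum the resulting geometric series (which is convergent for $d\ge 2$ and flat for $d=1$). The only cosmetic difference is the choice of witness: the paper uses the Hamiltonian path through the leaves in left-to-right $T$-order, while you build a recursive star tree, but the two have the exact same multiset of edge weights, so the calculations coincide.
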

\begin{proof}
Let $P_X$ be the Hamiltonian path of $X$ that traverses the points of $X$ according to the induced left-to-right ordering in $T$ of the corresponding leaves. To prove the claim, we will show that $w(P_X) = O(n)$ for any $d \ge 2$ and $w(P_X) = O(n \log n)$ for $d = 1$. (It can be shown that $P_X = \mst_X$, but there is no need for that, since we only need to upper bound $w(\mst_X)$ and we have $w(P_X) \ge w(\mst_X)$.)

Denote the root of $T$ by $r$, and note that $\Gamma(r) = 2^{\frac{\log_2 n}{d}} = n^{1/d}$. Observe that the number of edges in $P$ of weight $\Gamma(r) = n^{1/d}$ is exactly one less than the number of children, i.e., $2^d - 1$. In general, note that the number of edges in $P$ of weight $\frac{\Gamma(r)}{2^i} = \frac{n^{1/d}}{2^{i}}$ is $2^{d \cdot i}(2^d - 1)$, for each $i \in [0,\frac{\log_2 n}{d}-1]$. 
It follows that 
\begin{align*}
     w(P_X) = \sum_{i = 0}^{\frac{\log_2 n}{d}-1} {2^{d \cdot i}(2^d - 1) \cdot \frac{n^{1/d}}{2^{i}}}  
     ~=~  (2^d-1) \cdot n^{1/d}  \cdot \sum_{i = 0}^{\frac{\log_2 n}{d}-1} {(2^{d-1})^i}.
\end{align*}
If $d = 1$, we get that $w(P_X) = n \cdot \log_2 n.$
\\If $d \ge 2$, then 
$\sum_{i = 0}^{\frac{\log_2 n}{d}-1} {(2^{d-1})^i}$ is a geometric sum with rate $2^{d-1} \ge 2$, hence we have
$$\sum_{i = 0}^{\frac{\log_2 n}{d}-1} {(2^{d-1})^i} ~\le~
2 \cdot 2^{(d-1) \cdot (\frac{\log_2 n}{d}-1)} ~=~ 2 \cdot 2^{\log_2 n - d - \frac{\log_2 n}{d} + 1} ~=~ 4 \cdot \frac{n}{2^d \cdot n^{1/d}},$$ and we get that
$w(P_X) = (2^d-1) \cdot n^{1/d}  \cdot \sum_{i = 0}^{\frac{\log_2 n}{d}-1} {(2^{d-1})^i} \le 4n$. 
\end{proof}
Finally, consider the case $d = 1$.
\Cref{cl:MST} gives $w(\mst_X) = O(n \log n)$, but to get the required lightness bound we need an upper bound of $w(\mst_X) = O(n \cdot \log(\eps^{-1}))$.

Without loss of generality, we may assume that $n = n'$ and that $n' = \Theta(\eps^{-1})$. In this case, \Cref{cl:MST} implies that $w(\mst_X) = O(n' \log n') = O(n' \log(\eps^{-1}))$, and the lightness of $G$ is 
$\Omega(\frac{\eps^{-2}}{\log(\eps^{-1})})$,
as required. We introduce two different reductions justifying this assumption.

First, one can simply take the same ultrametric as before on top of $n'$ points (instead of $n$) as leaves, and then add $n-n'$ points that are arbitrarily close to one of the $n'$ points, to get a metric space with $n$ points and basically the same MST weight; while this reduction provides the required lightness lower bound, it does not preserve the size lower bound.

To get a single instance for which both lower bounds apply, one can do the following. Create $n / n'$ vertex-disjoint copies of the same ultrametric $(X,d_X)$ as before, each on top of $n'$ points as leaves, denoted by $(X^{(j)},d_{X^{(j)}})$, for each $j \in [n/n']$, and place these copies ``on a line'', so that neighboring copies 
$(X^{(j)},d_{X^{(j)}})$ and $(X^{(j+1)},d_{X^{(j+1)}})$
are at distance say $2n'$ from each other. Since any two copies are sufficiently spaced apart from each other (w.r.t.\ the maximum pairwise distance in each copy, namely $n'$), any $(1+\eps)$-spanner edge between two points in the same copy must be   contained in that copy, hence we can apply the aforementioned  size and weight spanner lower bounds on each of the copies separately, and then aggregate their size and weight bounds to achieve the same size and weight lower bounds as before, namely
$\Omega(n \cdot \eps^{-1})$ and $\Omega(n \cdot \eps^{-2})$, respectively,
for the entire spanner. Since each copy has weight $O(n' \log n')$ by \Cref{cl:MST} and as neighboring copies are sufficiently close to each other, 
the MST weight of this metric is bounded by $O(n' \log n') \cdot (n / n') = O(n \cdot \log(\eps^{-1}))$, which proves the required lightness lower bound $\Omega(\frac{\eps^{-2}}{\log(\eps^{-1})})$.

\section{Bounded Degree Tree Cover} \label{sec:treecover}
\TreeCoverBoundedDegree*

\subsection{Phase 1: Unbounded Degree Tree Cover}\label{subsec:treecover_nodegbound}

\begin{lemma}
    For any metric $(X, d_X)$ with doubling dimension $d$, and any parameter $0<\varepsilon<1$, there is a $(1+\varepsilon)$-stretch tree cover $\mathcal{T}$ of size $O\left({\varepsilon}^{-d}\cdot \log(1/\varepsilon)\right)$, where the maximum degree of a vertex in each of the trees in $\mathcal{T}$ is $O(\varepsilon^{-d}\log_{1/\eps} \Delta)$.
\end{lemma}
We first give a construction of a tree cover with maximum degree $O(\eps^{-d}\log_{1/\eps}\Delta)$. Later, in~\Cref{sec:treecover_bounded_deg}, we modify this tree cover to get constant degree bound.

\begin{enumerate}
    \item \textbf{[Creating intra-level edge matchings.]} Let $\delta_p=2^p$ for $p\in\{0,\ldots, \floor{\log(1/\eps)\}}$, 
    and let $\tau_{\delta_p}$ be a $(\delta_p, \eps^{-1})$-net tree (see~\Cref{def:alpha-beta-net-tree}) of $X$ and $N_i$ be the $\delta_p\eps^{-i}$-net in $\tau_{\delta_p}$. For each $p$, construct graph $G_p$, where $V(G_p) = \{u_i \mid u\in N_i\}_{\forall i\in \mathbb{N}}$ and $E(G_p)=\emptyset$.
    Define a set of \textit{cross edges} $E_c = \{(u_i,v_i) \mid d_X(u,v) \in (\frac{\delta_p}{3\varepsilon^{i+1}}, \frac{4\cdot \delta_p}{3\varepsilon^{i+1}})\}_{\forall i}$.  
    Partition $E_c$ into $\gamma = O(\varepsilon^{-d})$ matchings $\mathcal{M} = \{M_1, \dots, M_\gamma\}$, with the property that in each $M\in \mathcal{M}$, $d_X(u_i,v_i)> \frac{14\delta_p}{\eps^i}$, for all matched $u_i$, $v_i$. Initialize $T_k^p = (V(G_p), M_k)$ for $k\in\{1,\ldots, \gamma\}$.

    \item \textbf{[Creating inter-level edges.]} For each $T=T_k^p$, construct edges between levels $i$ and $(i-1)$ in the order $i=0,1,\ldots$. Create a relabel function $\Phi_i^T: N_i\xrightarrow{} N_i$, initialized as $\Phi_i^T(u_i) = u_i$ for all $u_i\in N_i$. At each level $i$, process vertices in the order specified below: 
    
    \begin{enumerate}
        \item \textbf{[Matched vertices.]} For each matched $u_i$ in $M_k$, add $(u_i, u_{i-1})$. For $v\in B_X(u, \frac{4\delta_p}{\eps^i})$ such that $v_{i-1}$ exists, add $(u_i, v_{i-1})$ if either $v_{i-1}$ is unmatched, or $(v_{i-1}, x_{i-1})\in M_k$ and $x_{i-1}$ does not have a parent at level $i$. 
        
        \item \textbf{[Unmatched vertices.]} For each unmatched $u_i$: enumerate its child nodes in net-tree $\tau_{\delta_p}$, denoted by $C_{u_i}=\{a_{i-1}^0, a_{i-1}^1,\ldots, a_{i-1}^\eta\}$. For every $a_{i-1}\in C_{u_i}$:
        \begin{itemize}
            \item Add $(\Phi_i^T(u_i), a_{i-1})$ in the following cases: 1) $a_{i-1}$ is unmatched and does not have a parent, and 2) $(a_{i-1}, b_{i-1})\in M_k$ and both $a_{i-1}$, $b_{i-1}$ do not have parents at level $i$. 
            \item Otherwise, if $a_{i-1}$ has a parent $x_i$, or $(a_{i-1}, b_{i-1})\in M_k$ and $b_{i-1}$ has a parent $x_i$, merge $x_i$ with $\Phi_i^T(u_i)$. Set $\Phi_i^T(x_i)=\Phi_i^T(u_i)$ if $u_{i+1}$ exists, otherwise set $\Phi_i^T(u_i)=x_i$. (Note that $a_{i-1}$ and $b_{i-1}$ will not have parents at the same time.)
        \end{itemize}
        
    \end{enumerate}

    \item \textbf{[Contraction.]} Contract all vertices sharing label $u$. 
    $\mathcal{T} =  \{T_k^p\}_{p,k}$
    forms a tree cover.
\end{enumerate}

Throughout the construction (even after relabeling), the weight of any edge, $w_T(u_i, v_j) = d_X(u, v)$. In Step 2, we slightly abuse the notation, setting $\Phi_i^T(u_i)= x_i$, instead of $\Phi_i^T(\Phi_i^T(u_i))= x_i$. This is fine, since a node undergoes relabeling at most once. This finishes the construction of the tree cover.

Unless stated otherwise, we refer to the intermediate trees just prior to Step 3 as the tree cover $\mathcal{T}$ and prove our bounds on them. This works since the contraction of only 0-weight edges in Step 3 does not affect the stretch.

We first prove that each $T$ in $\mathcal{T}$, after contraction, is a tree (see~\Cref{cor: tree_cover_is_tree}). Then, we bound the number of trees in $\mathcal{T}$ (see~\Cref{lem:number_of_trees}). Finally, we use~\Cref{clm:edge_weight_bounds} and~\Cref{clm:dist_leaf_to_ancestor} to bound the final stretch in~\Cref{lem:final_stretch_bound_ph1} (proof deferred to~\Cref{sec:omitted_proofs}).

We first require the following combinatorial lemma of~$\cite{FL22}$, used in Step 1 of our construction. 

\begin{lemma}[Lemma 6 of~\cite{FL22}]\label{lem:HL_comb} 
    Consider a graph $G = (V, E_b \cup E_r)$ with disjoint sets of blue edges $E_b$ and red edges $E_r$, such that the maximum blue degree is $\delta_b\geq 1$, and the maximum red degree is $\delta_r\geq 1$. Then there is a set of at most $\gamma = O(\delta_b\delta_r)$ matchings $\mathcal{M} = \{M_1, \ldots, M_\gamma\}$ of $G$ such that: 
    \begin{enumerate}
        \item $E_b\subseteq \cup_{i=1}^\gamma M_i$, and
        \item for every matching $M\in \mathcal{M}$, there is no red edge whose both endpoints are matched by $M$. 
    \end{enumerate}
\end{lemma}

\begin{remark}
Note that we apply this lemma in Step 1 of the construction. More specifically, $E_b = E_c = \{(u_i,v_i) \mid d_X(u,v) \in (\frac{\delta_p}{3\varepsilon^{i+1}}, \frac{4 \delta_p}{3\varepsilon^{i+1}})\}_{\forall i}$, and $E_r = {\{(u_i,v_i) \mid d_X(u,v) \leq  \frac{14\delta_p}{\eps^i}}\}$. 
In each $M\in \mathcal{M}$, $d_X(u_i,v_i)> \frac{14\delta_p}{\eps^i}$ holds for all matched $u_i$, $v_i$.
\end{remark}

We begin by proving that each $T_k^p\in \mathcal{T}$ is a tree. We rely on the following inductive invariant: after processing the $i^{th}$ level, every level-$i$ subtree has a diameter at most $O(\eps)$ factor of the newly added level-$i$ cross edge. This large separation ensures that the forest remains a forest. 

\begin{restatable}{lemma}{IntermediateTree}
     Each $T_k^p\in \mathcal{T}$ is a tree.
\end{restatable}
\begin{proof}

Let $T=T_k^p$. We will prove the following two properties. First, each vertex at level-$i$ is connected to at most one vertex at level-$(i+1)$. Second, for every level-$i$ cross edge, at most one of the end points is connected to a level-$(i+1)$ vertex. These two arguments are sufficient condition for $T$ to be acyclic since all cross edges are mutually vertex-disjoint.

We begin by proving that these two properties hold in Step 2a. For any vertex $v_{i-1} \in B_X(u, \frac{4\delta_p}{\eps^i})$ and $u_i$ is a matched vertex of $T$, $v_i$ cannot also be a matched vertex of $T$ since $d_X(u,v) > \frac{14\delta_p}{\eps^i}$ should hold by \cref{lem:HL_comb}. Hence, any vertex $v_{i-1}$ is connected to at most one vertex at level-$(i+1)$. For any matched cross edge $(u_{i-1}, v_{i-1}) \in T$, we have $d_X(u,v) \in (\frac{\delta_p}{3\varepsilon^{i}}, \frac{4 \delta_p}{3\varepsilon^{i}})$, in which case $u_i$ and $v_i$ cannot both be matched vertices in level-$i$ since this requires $d_X(u,v) >  \frac{14\delta_p}{\eps^i}$. Thus, $(u_i, u_{i-1})$ and $(v_i, v_{i-1})$ will not be added at the same time, and $(u_{i-1}, v_{i-1})$ cannot have parents at both endpoints. Next, by construction (Step 2b), the two properties hold. This is because an edge between some $a_{i+1}$ and $b_i$ is created only when either $b_i$ is unmatched and does not have a parent, or some $(b_i, c_i)$ exists and neither of $b_i$ or $c_i$ have a parent. At each level of the net tree $\tau_{\delta_p}$, we go over all vertices.

Let $rt$ be the vertex at the highest level of $T$. For any two vertices $u, v\in V(T)$, we know that there is a path from $u$ to $\textit{rt}$, and from $v$ to $\mathit{rt}$ in $T$. This holds because each vertex is either connected to a vertex one level above, or to another vertex at the same level via a cross edge (since each such vertex is enumerated in Step 2b). Hence, there is a walk between $u$ and $v$ via $\textit{rt}$ in $T$.
This implies that $T$ is connected. 
\end{proof}

\begin{restatable}{corollary}{EachTisTree}{$(\star)$}
\label{cor: tree_cover_is_tree}
    After the contraction procedure, each $T_k^p$ in $\mathcal{T}$ remains a tree.
\end{restatable}

\begin{lemma}\label{lem:number_of_trees}
    The number of trees in $\mathcal{T}$ is $O(\eps^{-d}\log(1/\eps))$.
\end{lemma}
\begin{proof}
    For each $G_p$ ($p\in \{0,1\ldots,\lfloor\log(1/\eps)\rfloor\}$, we take $V = V(G_p)$, $E_b=E_c$. We define a set of dummy edges in $G_p$: $E_r = \{(u_i, v_i)\mid d_X(u, v)\leq \frac{14\delta_p}{\eps^i}\}_{\forall i}$. By packing lemma, the maximum degree in $E_b$ at each level $i$ is $O((\delta_p\eps^{-(i+1)}/\delta_p\eps^{-i})^d) = O(\eps^{-d})$.
    
    Similarly, the maximum degree in $E_r$ is $O(14\delta_p\eps^{-i}/\delta_p\eps^{-i})^d = c^d$ for some constant $c$. We can now invoke~\Cref{lem:HL_comb} on $G= (V, E_b\cup E_r)$ with $\delta_b = O(\eps^{-d})$ and $\delta_r = c^d$ to get $O((c\eps^{-1})^d) = O(\eps^{-d})$ matchings (each corresponding to a tree) with the required properties. Hence, the total number of trees in $\mathcal{T}$ is $O(\gamma\cdot \log(1/\eps)) = O(\eps^{-d}\log(1/\eps))$.
\end{proof}

\begin{restatable}{claim}{EdgeWeightBounds}{$(\star)$}
\label{clm:edge_weight_bounds}
    Let $T=T_k^p$. Let $(a_i, b_i)$ be a cross edge and $(c_i, d_{i-1})$ be an inter-level edge in $T$. For $\eps<1/156$, we have
    \[\frac{\delta_p}{4\eps^{i+1}} \leq d_T(a_i,b_i) \leq \frac{3\delta_p}{2\eps^{i+1}}, \quad \text{and}\quad d_T(c_i, d_{i-1})\leq \frac{27}{2}\frac{\delta_p}{\eps^i}.\]

\end{restatable}

\begin{restatable}{claim}{DistLeafToAncestor}{$(\star)$}
\label{clm:dist_leaf_to_ancestor}
    For any $T_k^p\in \mathcal{T}$, the distance from a leaf node $u_0$ to any of its ancestors at level $i$ is at most $2\delta_p\varepsilon^{-(i+1)}$.
\end{restatable}

\begin{restatable}{lemma}{FinalStretchBoundPhaseOne}{$(\star)$}
\label{lem:final_stretch_bound_ph1}
    For every $u,v\in X$, there exists a tree $T$ in the tree cover $\mathcal{T}$ such that:
    $d_T(u, v) \leq (1+\eps) d_X(u, v).$
\end{restatable}

\subsection{Phase 2: Bounding the Degree}\label{sec:treecover_bounded_deg}

The algorithm in~\Cref{subsec:treecover_nodegbound} builds the tree cover $\mathcal{T}$ of size $O(\varepsilon^{-d}\log(1/\eps))$. In this section, we focus on adjusting edges of these trees to achieve constant degree bound.
We state the main algorithm and refer to~\Cref{sec:treecover_bdd_deg_appendix} for the detailed analysis of this section. 

Let $T$ denote a tree in the tree cover $\mathcal{T}$ constructed before Step 3 (contraction).
Let $l(x)$ be the highest level of $x$ in the net tree $\tau_{\delta_p}$. We assign a direction to each edge of $T$ based on the levels of its endpoints in $\tau_{\delta_p}$. An edge $e = (a_i \rightarrow b_j)$ is going out from $a_i$ to $b_j$ if $l(a) \leq l(b)$. If $l(a) = l(b)$, we assign the direction arbitrarily. Similar to Phase 1, we categorize the set of directed edges into two categories:
\begin{enumerate}
    \item An edge $(x_i\rightarrow p_{i+1})$ is an \EMPH{inter-level edge at level $i$} if $x \in N_i \setminus N_{i+1}$ and $p \in N_{i+1}$.
    Let $D_i$ be the set of inter-level edges at level $i$. We denote the set of points $p_{i+1}$ incident to an inter-level edge $(x_i\rightarrow p_{i+1})$ at level $i$ by $D_i^-(x)$, and the set of points $w_i$ incident to an inter-level edge $(w_i\rightarrow x_{i+1})$ at level $i$ by $D_i^+(x)$. 
    \item An edge $(x_i\rightarrow y_i)$ is a \EMPH{cross edge at level $i$} if $x, y\in N_i$. Let $C_i$ be the set of all cross edges at level $i$. Given a point $x_i$, we denote the set of all points $y_i$ having a cross edge $(y_i \rightarrow x_i)$ by $C_i^+(x)$, and the set of all points $z_i$ having a cross edge $(x_i \rightarrow z_i)$ by $C_i^-(x)$. 
\end{enumerate}

By our Phase 1 construction, we observe the following bounds:
\begin{restatable}{observation}{DegBounds}{}
\label{obs:originaldeg}    
Given a fixed level $i$, for every vertex $x \in N_i$, $|D_i^+(x)| = O(\varepsilon^{-d})$, $|D_i^-(x)| \leq 1$, $|C_i^+(x)| \leq 1$, $|C_i^-(x)| \leq 1$. Furthermore, for every point $x$, we have $\sum_i |D_i^-(x)| \leq 1$, and $\sum_i |C_i^-(x)| \leq 2$.
\end{restatable}
We use the degree reduction technique of~\cite{CGMZ05} to reduce degree associated with both cross edges and inter-level edges.
The input and output of this technique are specified in~\Cref{rm:redirect_alg_input} and~\Cref{lmm:redirecting}, respectively.

\begin{restatable}{remark}{RedirectAlgInput}{}
\label{rm:redirect_alg_input}
    The input tree $T$ must be a tree in the tree cover, and we consider a subset $E$ of edges in $T$. Let $E_i$ denote the edges in $E$ at level $i$, and $E_i^{+}(x)$ (resp., $E_i^{-}(x)$) denote the set of points connected to $x$ by incoming (resp., outgoing) edges in $E_i$. The subset $E$ must satisfy the following degree constraints for every point $x$: 1) $O(1)$ in-coming edges at every level, 2) at most one out-going edge at every level, and 3) totally $O(1)$ out-going edges over all levels. Formally, $|E_i^+(x)| \leq d_{in}$, $|E_i^-(x)| \leq 1$ and $\sum_i|E_i^-(x)| \leq d_{out}$, where $d_{in} > 0$ and $d_{out} > 0$ are constants.
\end{restatable}

\begin{restatable}{lemma}{RedirectingLem}{}
\label{lmm:redirecting}
    Given a tree cover $\mathcal{T}$, let $T$ be a tree in $\mathcal{T}$ that satisfies~\Cref{rm:redirect_alg_input}. There exists an algorithm redirecting edges in $E$ of $T$ to obtain a tree $T'$ with a new edge set $E'$ such that:
    \begin{itemize}
        \item (Bounded degree.) Every point has at most $2+2d_{in} +d_{out}$  edges in $E'$.
        \item (Small stretch.) For every edge $(y, x)$ in $E$ of $T$, there exist a path $\pi$ in $T'$ from $y$ to $x$ such that $\pi$ contains edges in $E'$, and $d_{T'}(y, x) \leq (1 + O(\varepsilon)) \cdot d_X(y, x)$. 
    \end{itemize}
\end{restatable}
\vspace{.5cm}
\textbf{\underline{Algorithm.}} We use two operations: redirect cross edges and redirect inter-level edges. 

\begin{itemize}
    \item\textnormal{\textbf{[Step 1. Redirecting cross edges.]}} We apply~\Cref{lmm:redirecting} to re-direct cross edges, where $E = \bigcup_i C_i$.
    $E_i^-$ and $E_i^+$ in~\Cref{rm:redirect_alg_input} will respectively be $C_i^-$ and $C_i^+$, where $|C_i^+(x)| \leq 1, |C_i^-(x)| \leq 1, \sum_t|C_t^-(x)| \leq 2$ for every point $x$ and every level $i$ (by~\Cref{obs:originaldeg}).
    \item \textnormal{\textbf{[Step 2. Redirecting inter-level edges.]}}
    \begin{itemize}
    \item\textnormal{\textbf{Step 2.1.}} We visit levels from the bottom to top. Let $Q_i$ be the set of new edges at level $i$ created by this step. Initialize $Q_i = \emptyset$. Let $Q_i^-(p)$ be the set of points $q$ having an edge $(p \rightarrow q)$ in $Q_i$. Let $Q_i^+(p)$ be the set of points $q$ having an edge $(q \rightarrow p)$ in $Q_i$.

    Given a fixed level $i$, for every vertex $x_{i+1} \in N_{i+1}$, recall that the set of in-coming inter-level edges at level $i$ of $x$ is $D_i^+(x) = \{w_i^{(1)}, w_i^{(2)}, \ldots \}$, we know $|D_i^+(x)| \leq O(\varepsilon^{-d})$.
    We sort $D_i^+$ by increasing order of distances to $x_{i+1}$.
    Consider $w^{(j)}_i \in D_i^+(x)$:
    \begin{itemize}
        \item If $j \leq 2$: keep $(w^{(j)}_i \rightarrow x)$. 
        \item If $j>2$: let $j^* = \floor{(j-1)/2}$; remove $(w^{(j)}_i \rightarrow  x_{i+1})$ and create edge $(w^{(j^*)} _i\rightarrow  w^{(j)}_i)$ with weight $d_X(w^{(j^*)}_i , w^{(j)}_i)$. We add $(w^{(j^*)}_i \rightarrow  w^{(j)}_i)$ to $Q_i$.
    \end{itemize}
         \item\textnormal{\textbf{Step 2.2.}} We apply~\Cref{lmm:redirecting} to re-direct inter-level edges remaining after Step 2.1, where $E = \bigcup_{i} D_i$.         
         $E_i^-$ and $E_i^+$ in~\Cref{rm:redirect_alg_input} will be $D_i^-$ and $D_i^+$ respectively, where
        $|D_k^+(p)| \leq 2, |D_k^-(p)|\leq 1, \sum_t|D_t^-(p)| \leq 1$ for every point $p$ and level $k$. These bounds hold  by~\Cref{obs:originaldeg} and by Step 2.1 (when $j \leq 2$).
        \item\textnormal{\textbf{Step 2.3.}} We apply~\Cref{lmm:redirecting} to re-direct $E = \bigcup_i Q_i$ created by Step 2.1.

         $E_i^-$ and $E_i^+$ in~\Cref{rm:redirect_alg_input} will be respectively $Q_i^-$ and $Q_i^+$. 
    \end{itemize}
\end{itemize}

This finishes the construction of the bounded degree tree cover.

\bibliography{ref, RPTALGbib}

\appendix
\section{Lower Bound for Maximum Degree of a Metric Spanner}\label{sec:lowerbound_maxdeg}
\LowerBoundMaxDegree*
We construct a metric space of doubling dimension $d$ such that the maximum degree of some point $x$ in any spanner $H$ must be at least $\Omega(\eps^{-d} \cdot \log(1/\eps))$. Our metric is a $2$-HST. The idea is, we construct a set of point $x$ in a doubling metric space $X$, such that for each of $\log{1/\eps}$ distance scales, there are $O(\eps^{-d})$ points within that distance to $x$. Any spanner $H$ must contains all edges from $x$ to all points in each distance scale.

We first build the tree $T$ with label function $\Gamma$, which have $1 + \log{1/\eps} \cdot \eps^{-d} / 4^d$ leaves. Let the set of leaves in $T$ be $x, v_{i, j}$ with $i \in [1, \log{1/\eps}]$ and $j \in [1, \eps^{-d} / 4^d]$. For simplicity, we assume that $\eps^{-1}$ is an integer divided by $4$. We create $\log{1/\eps}$ nodes $y_1, y_2, \ldots y_{\log{1/\eps}}$ such that $\Gamma(y_k) = \frac{2^{k - 1}}{\eps}$. The node $y_{\log{1/\eps}}$ is the root of $T$. We add the path $x, y_1, y_2, \ldots y_{\log{1/\eps}}$ to $T$. Then, we construct $\log{1/\eps}$ subtrees of $T$, denoted by $T_1, T_2, \ldots T_{\log{1/\eps}}$, such that:
\begin{itemize}
    \item For each $i$, the leaves of $T_i$ are $v_{i, j}$ for $j \in [1, \eps^{-d} / 4^d]$
    \item Each $T_i$ is a perfect $2^d$-ary tree of height $\log{1/4\eps}$. 
    \item For each node $u$ at height $k > 0$ of $T_i$, $\Gamma(u) = 2^{k + i}$.
\end{itemize}
Finally, we connect the root of $T_i$ to $y_i$ for each $i$. Let $(X, d_X)$ be the ultrametric corresponding to $T$. See \Cref{fig:bdd_deg_HST}.

\begin{figure}[tph!]
\centerline{\includegraphics[height = 6cm]{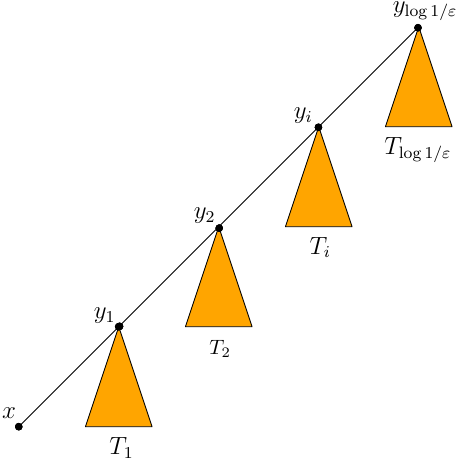}}
    \caption{Construction of $T$.}
    \label{fig:bdd_deg_HST}
\end{figure}

Using the same proof as \Cref{clm:bdd-dd}, $(X, d_X)$ is an ultrametric space of doubling dimension bounded by $d$. Consider any two points in $X$ that is not $x$. Observe that:

\begin{observation}
    \label{obs:bdd-dist}
    For any $i_1, i_2, j_1$ and $j_2$ such that $i_1, i_2 \in [1, \log {1/\eps}]$ and $j_1, j_2 \in [1, (1/4\eps)^d]$, we have:
    \begin{enumerate}
        \item \label{it:same-tree-dist} if $i_1 = i_2 = i$ and $j_1 \neq j_2$, $d_X(v_{i, j_1}, v_{i, j_2}) \geq 2^{i + 1}$, and
        \item \label{it:dif-tree-dist}if $i_1 \neq i_2$, $d_X(v_{i_1, j_1}, v_{i_2, j_2}) = 2^{\max(i_1, i_2) - 1}/\eps$.
    \end{enumerate}
\end{observation}

We then show the main lemma.

\begin{lemma}
    For any spanner $H$ of $X$, the degree of $x$ in $H$ is at least $\log{1/\eps} \cdot \eps^{-d} / 4^d$.
\end{lemma} 

\begin{proof}
    We show that for any point $v_{i, j}$ ($1 \leq i \leq \log{1/\eps}, 1\leq j \leq \eps^{-d} / 4^d$), the edge $(x, v_{i, j})$ must be in $H$. Assume $(x, v_{i, j})$ is not in $H$. Let $P$ be the shortest path from $x$ to $v_{i, j}$ in $H$. Let $v_{i', j'}$ ($1 \leq i' \leq \log{1/\eps}, 1\leq j' \leq \eps^{-d} / 4^d$) be the point before $v_{i, j}$ in $P$. Observe that $d_X(x, v_{i, j}) = \frac{2^{i - 1}}{\eps}$. On the other hand:
    \begin{equation*}
        \begin{split}
            d_H(x, v_{i, j}) &= d_H(x, v_{i', j'}) + d_X(v_{i', j'}, v_{i, j})
        \end{split}
    \end{equation*}
    If $i = i'$, by \Cref{it:same-tree-dist} of \Cref{obs:bdd-dist}, we have $d_X(v_{i', j'}, v_{i, j}) \geq 2^{i + 1}$. Thus,
    \begin{equation*}
        \begin{split}
            d_H(x, v_{i, j}) &= d_H(x, v_{i', j'}) + d_X(v_{i', j'}, v_{i, j})\\
            &\geq d_X(x, v_{i', j'}) + 2^{i + 1} = \frac{2^{i - 1}}{\eps} + 2^{i + 1} \\
            &> (1 + \eps) \cdot \frac{2^{i - 1}}{\eps} = (1 + \eps) \cdot d_X(x, v_{i, j}),
        \end{split}
    \end{equation*}
    contradicting to the fact that $H$ is a spanner of $X$.

    If $i > i'$, by \Cref{it:dif-tree-dist} of \Cref{obs:bdd-dist}, we have $d_X(v_{i', j'}, v_{i, j}) = 2^{i - 1}/\eps$. Then,
    \begin{equation*}
        \begin{split}
            d_H(x, v_{i, j}) &= d_H(x, v_{i', j'}) + d_X(v_{i', j'}, v_{i, j})\\
            &\geq d_X(x, v_{i', j'}) + \frac{2^{i - 1}}{\eps} = \frac{2^{i' - 1}}{\eps} + \frac{2^{i - 1}}{\eps} \geq \frac{1}{\eps} + \frac{2^{i - 1}}{\eps}\\
            &> (1 + \eps) \cdot \frac{2^{i - 1}}{\eps} = (1 + \eps) \cdot d_X(x, v_{i, j}) , 
        \end{split}
    \end{equation*}
    a contradiction. The penultimate equation holds because $\eps \cdot \frac{2^{i - 1}}{\eps} < \eps \cdot \frac{2^{\log{1/\eps}}}{\eps} = 1/\eps$ when $i \leq \log{1/\eps}$.

    If $i < i'$, $d_H(x, v_{i, j}) = d_H(x, v_{i', j'}) + d_X(v_{i', j'}, v_{i, j}) > d_X(x, v_{i', j'}) \geq 2\cdot d_X(x, v_{i, j}) > (1 + \eps)d_X(x, v_{i, j})$, a contradiction.

    Therefore, there is an edge from $x$ to all $v_{i, j}$, implying the degree of $x$ is at least $\log{1/\eps} \cdot \eps^{-d} / 4^d.$
\end{proof}
\section{Bounded Degree Tree Cover - Analysis}
\label{sec:treecover_bdd_deg_appendix}

Every tree $T=T_k^p\in\mathcal{T}$ is constructed from the $(\delta_p,\varepsilon^{-1})$-net tree $\tau_{\delta_p}$. 
Recall that $N_i$ (the $\delta_p\eps^{-i}$-net) is the set of points associated with nodes at level $i$ of $\tau_{\delta_p}$. 
Recall that $N_{-1} = X$, a point can appear in many levels, and $N_i \subseteq N_{i-1}$. 
For every vertex $x_i$ and its child $w_{i-1}$ in $T$, where $i \geq 0$, we have $d_X(w, x) \leq O(\frac{\delta_p}{\varepsilon^i})$. 
For every node $x_i$ and $y_i$ where $i \geq 0$, we have $d_X(x, y) > \frac{\delta}{\varepsilon^i}$ (the packing property of $\tau_{\delta_p}$). In this section, we switch between the notations $x$ and $x_i$ (denoting the vertex $x$ at level $i$) as per convenience.

By~\Cref{clm:edge_weight_bounds}, both inter-level edge and cross edge at level $i$ have weights of $O(\frac{\delta_p}{\varepsilon^{i+1}})$.  
By the packing property, we have:

\begin{claim}\label{clm:survive}
    For any inter-level or cross edge $(a \rightarrow b)$ at level $i$, $l(a) \leq i + 1$.
\end{claim}

\begin{proof}
    This claim clearly holds if $(a_i \rightarrow b_{i+1})$ is an inter-level edge. If $(a_i\rightarrow  b_i)$ is a cross edge, we have $d_X(a, b) \leq O(\frac{\delta}{\varepsilon^{i+1}})$, thus at most one point in $\{a, b\}$ appears in $N_{i+2}$. Since $l(a) \leq l(b)$, we obtain $l(a) \leq i+1$. 
\end{proof}

Recall that in the construction in~\Cref{subsec:treecover_nodegbound}, we created: (1) cross edges at level $i$ from matching two points in $N_i$, and (2) inter-level edges between level $i-1$ and $i$ by considering a ball of radius $\Theta(\frac{\delta}{\varepsilon^i})$. Notice that a point $x\in X$ has an inter-level edge at a unique level $j$ where $x_j \in N_j \setminus N_{j+1}$, thus $\sum_i|D_i^-(x)| \leq 1$.
Furthermore, once $x_j$ has an out-going edge, by~\Cref{clm:survive}, $x_{j+2}\notin V(T)$. thus $x$ has at most two out-going cross edge at levels $j$ and $j+1$. We restate the following observation:

\DegBounds*

To obtain bounded degree for every vertex, we redirect some edges as follows: we remove $e = (s \rightarrow t)$, then create an edge $e' = (z \rightarrow s)$ where $z$ is a vertex that has a path to $t$ in $T$. We say \EMPH{$e$ is redirected to $z$}, and  $(z\rightarrow s)$ is a \EMPH{redirecting edge}. We denote $s \in R^-(z)$, and $z \in R^+(s)$. 
In~\Cref{sec:redirect_alg}, we show an algorithm to redirect edges. After that, we will apply this algorithm on cross edges directly, and on inter-level edges indirectly.
 
\subsection{Redirecting Incoming Edges}\label{sec:redirect_alg} 
This section uses the degree reduction technique of~\cite{CGMZ05}. The input $T$ of this algorithm must be a tree in the tree cover $\mathcal{T}$. We will consider a subset $E$ of edges in $T$, and this set must satisfy constant out-going degree for every vertex (see~\Cref{rm:redirect_alg_input}). Our goal is to redirect edges in $E$ to obtain constant degree (both in-coming and out-going, see~\Cref{lmm:redirecting}).
An edge 
$(a_i \rightarrow b_i) \in E$ is at level $i$ if $a_i \in N_i \setminus N_{i+1}$ and $b_i \in N_i$. In this case, we denote $a \in E_i^+(b), b \in E_i^-(a)$. Here we suppose that an edge at level $i$ has a weight of $O(\frac{\delta}{\varepsilon^{i+1}})$.

\RedirectAlgInput*

Let $I_x$ be the list of levels where $x$ has an in-coming edge $(y_i\rightarrow x_i)$, i.e., if $i \in I_x$ then $|E_i^+(x)| > 0$, if $i \not\in I_x$ then $|E_i^+(x)| = 0$. 
We sort $I_x$ in increasing order.
If $i$ is the $h^{th}$ element of $I_x$, we denote $i = I_x(h)$, and $h = I_x^{-1}(i)$.

\textbf{Algorithm.~} We visit $T$ from the bottom to the top level. Given a fixed level $i$, if $(y_i\rightarrow x_i)$ is an incoming  edge of $x$, let 
 $h = I_x^{-1}(i)$, implying $(y\rightarrow x_i)$ is the $h^{th}$  edge of $x$. 

\begin{itemize}
    \item\textnormal{[Case a.]} If $h \leq 2$ we skip this step and keep $(y_i\rightarrow x_i)$. 
    \item\textnormal{[Case b.]} If $h >2$, let $i' = I_x(h-2)$ and $z_{i'}$ be the only element in $E_{i'}^+(x)$; we look at $(z_{i'}\rightarrow x_{i'})$ which is the $(h-2)^{th}$ in-coming edge of $x$. We remove $(y_i\rightarrow x_i)$ and create an edge $(z\rightarrow y)$ with weight $d_X(z, y)$.
\end{itemize}

\begin{observation}\label{obs:new_diredge}
    $(z_{i'}\rightarrow y_i)$ is the redirecting edge obtained from $(y_i\rightarrow x_i)$.
\end{observation}
\begin{proof}
    $(z_{i'}\rightarrow x_{i'})$ is an in-coming edge of $x$ at level $i' < i$.  By induction, $z$ has a path in $E$ to $x$, thus $y$ has a path in $E$ to $x$ and the claim holds.
\end{proof}

\begin{observation}\label{obs:close_dirpoint}
    $d_X(z, x) \leq O(\varepsilon) d_X(y, x)$.
\end{observation}
\begin{proof}
    Since $(y_i \rightarrow x_i)$ is an edge at level $i$, we have $y, x \in N_{i}$. By the packing property of $\varepsilon$-net tree $T$, we have $d_X(x, y) > \frac{\delta}{\varepsilon^{i}}$. In addition, since $(z_{i'} \rightarrow x_{i'})$ is an edge at level $i' \leq i-2$,  we have $d_X(z, x) \leq O(\frac{\delta}{\varepsilon^{i-1}})$. Thus the claim holds. 
\end{proof}

\begin{claim}\label{clm:bounded_edge}
    After redirecting all edges, every point $p$ has a constant number of edges incident to $p$: $\sum_{i}|E_i^+(p)| \leq 2$, $|R^+(p)| \leq d_{out}$, and $|R^-(p)| \leq 2d_{in}$.
\end{claim}
\begin{proof}
    We keep only the first and the second in-coming edge of $p$ (if it exists), thus $\sum_{i}|E_i^+(p)| \leq 2$.
    
    Recall that the algorithm redirects an edge $(y_i \rightarrow x_i)$ to $z_{i'}$, then we obtain $(z \rightarrow y)$ is a redirecting edge by~\Cref{obs:new_diredge}. 
    In this case, $|R^-(z)|$ increases by one and $|R^+(y)|$ increases by one.

    For $|R^+(p)|$, observe that it is at most the number of out-going edges that $p$ has. Thus, $|R^+(p)| \leq d_{out}$.
    For $|R^-(p)|$, let $i$ be the first level that $p$ has an out-going  edge. By~\Cref{clm:survive}, we have $l(p) \leq i+1$. In this case, there are at most two out-going edges of $p$, which are $(p \rightarrow x_1)$ at $i$ and $(p \rightarrow x_2)$ at level-$(i+1)$. For each point $x \in \{x_1, x_2\}$, we have at most $d_{in}$ in-coming edges of $x$ that we redirect to $p$. Therefore, $|R^-(p)| \leq 2d_{in}$.
\end{proof}

\begin{claim}\label{clm:stretch_crossedge}
    For every in-coming edge $(y\rightarrow x)$ in $T$, after redirecting, we obtain a path $\pi$ from $y$ to $x$ such that the total weight of $\pi$, denote as $d_\pi(y, x)$, is at most $(1 + O(\varepsilon)) d_X(x, y)$.
\end{claim}
\begin{proof}
    Suppose that we walk from $y$ to $x$ through a sequence of points $v_1, v_2 \ldots v_k$. Recall that we redirect $(y\rightarrow x)$ to $v_1$, $(v_1\rightarrow x)$ to $v_2$, and so on. Let $v_0 = y$, then we redirect $(v_i\rightarrow x)$ to $v_{i+1}$ for $i < k$. By~\Cref{obs:close_dirpoint}, $d_X(v_{i+1}, x) \leq O(\varepsilon) d_X(v_i, x)$. Thus: 
    \begin{align*}
        d_\pi(v_i, x) &= d_\pi(v_i, v_{i+1}) + d_\pi(v_{i+1}, x) = d_X(v_i, v_{i+1}) + d_\pi(v_{i+1}, x) \\
        &\leq d_X(v_i, x) + d_X(v_{i+1}, x) + d_\pi(v_{i+1}, x) \\
        &\leq d_X(v_i, x) (1 + O(\varepsilon)) + d_\pi(v_{i+1}, x) 
    \end{align*}
    By induction, suppose that $d_\pi(v_{i+1}, x) \leq (1 + O(\varepsilon)) d_X(v_{i+1}, x)$. We obtain:
    \begin{align*}
        d_\pi(v_i, x) &\leq d_X(v_i, x) (1 + O(\varepsilon)) + (1 + O(\varepsilon)) d_X(v_{i+1}, x) \\
        &\leq d_X(v_i, x) (1 + O(\varepsilon)) 
    \end{align*}    
    Thus $d_\pi(y, x) \leq (1 + O(\varepsilon)) \cdot d_X(y, x)$ as claimed.
\end{proof}

From~\Cref{clm:bounded_edge} and~\Cref{clm:stretch_crossedge}, we obtain the following result:

\RedirectingLem*

\subsection{Bounded Degree Tree Cover Construction}\label{sec:boundeddeg_tc_alg}

We restate the algorithm shown in~\Cref{sec:treecover_bounded_deg}. We use two operations: redirect cross edges and redirect inter-level edges. 

\begin{itemize}
    \item\textnormal{\textbf{[Step 1. Redirecting cross edges.]}} We apply~\Cref{lmm:redirecting} to re-direct cross edges, where $E = \bigcup_i C_i$.
    $E_i^-$ and $E_i^+$ in~\Cref{rm:redirect_alg_input} will respectively be $C_i^-$ and $C_i^+$, where $|C_i^+(x)| \leq 1, |C_i^-(x)| \leq 1, \sum_t|C_t^-(x)| \leq 2$ for every point $x$ and every level $i$ (by~\Cref{obs:originaldeg}).
    \item \textnormal{\textbf{[Step 2. Redirecting inter-level edges.]}}
    \begin{itemize}
    \item\textnormal{\textbf{Step 2.1.}} We visit levels from the bottom to top. Let $Q_i$ be the set of new edges at level $i$ created by this step. Initialize $Q_i = \emptyset$. Let $Q_i^-(p)$ be the set of points $q$ having an edge $(p \rightarrow q)$ in $Q_i$. Let $Q_i^+(p)$ be the set of points $q$ having an edge $(q \rightarrow p)$ in $Q_i$.

    Given a fixed level $i$, for every vertex $x_{i+1} \in N_{i+1}$, recall that the set of in-coming inter-level edges at level $i$ of $x$ is $D_i^+(x) = \{w_i^{(1)}, w_i^{(2)}, \ldots \}$, we know $|D_i^+(x)| \leq O(\varepsilon^{-d})$.
    We sort $D_i^+$ by increasing order of distances to $x_{i+1}$.
    Consider $w^{(j)}_i \in D_i^+(x)$:
    \begin{itemize}
        \item If $j \leq 2$: keep $(w^{(j)}_i \rightarrow x)$. 
        \item If $j>2$: let $j^* = \floor{(j-1)/2}$; remove $(w^{(j)}_i \rightarrow  x_{i+1})$ and create edge $(w^{(j^*)} _i\rightarrow  w^{(j)}_i)$ with weight $d_X(w^{(j^*)}_i , w^{(j)}_i)$. We add $(w^{(j^*)}_i \rightarrow  w^{(j)}_i)$ to $Q_i$.
    \end{itemize}
         \item\textnormal{\textbf{Step 2.2.}} We apply~\Cref{lmm:redirecting} to re-direct inter-level edges remaining after Step 2.1, where $E = \bigcup_{i} D_i$.         
         $E_i^-$ and $E_i^+$ in~\Cref{rm:redirect_alg_input} will be $D_i^-$ and $D_i^+$ respectively, where
        $|D_k^+(p)| \leq 2, |D_k^-(p)|\leq 1, \sum_t|D_t^-(p)| \leq 1$ for every point $p$ and level $k$. These bounds hold  by~\Cref{obs:originaldeg} and by Step 2.1 (when $j \leq 2$).
        \item\textnormal{\textbf{Step 2.3.}} We apply~\Cref{lmm:redirecting} to re-direct $E = \bigcup_i Q_i$ created by Step 2.1.

         $E_i^-$ and $E_i^+$ in~\Cref{rm:redirect_alg_input} will be respectively $Q_i^-$ and $Q_i^+$.          
    \end{itemize}
\end{itemize}

\begin{restatable}{observation}{DirAlgObs}{}
\label{obs:diralg_s2.1} 
At the end of Step 2.1, 
    $(w^{(j)}_i \rightarrow w^{(\lfloor{j-1)/2}\rfloor)}_i), (w^{(2j+1)}_i \rightarrow w^{(j)}_i)$ and $(w^{(2j+2)}_i \rightarrow w^{(j)}_i)$ are cross edges at level $i$. Furthermore, $|Q_k^+(p)| \leq 2$, $|Q_k^-(p)| \leq 1$, and $\sum_t|Q_t^-(p)| \leq 1$ for every point $p$ and level $k$.
\end{restatable}

\begin{proof}
    Recall that any child $w_i$ of $x_{i+1}$ has $d_X(w, x) \leq O(\frac{\delta}{\varepsilon^{i+1}})$. For any pair of children $w_{i}$ and $w'_{i}$ of $x_i$, since $l(w) = l(w') = i$, the edge between $w_i$ and $w'_i$ is a cross edge at level $i$. 
    $|Q_k^-(p)| \leq 1$ and $|Q_k^+(p)| \leq 2$ clearly holds by Step 2.1 (when $j > 2$). Since there is a unique level $i$ such that $w_i \in N_i\setminus N_{i+1}$, we obtain $\sum_t|Q_t^-(w)| \leq 1$.
\end{proof}

From Step 1, we obtain bounded cross-edge degree. From Step 2, we obtain bounded inter-level edge degree. 

\begin{observation}
    After redirecting, every point has at most a constant number of edges.
\end{observation}

Now we focus on proving the stretch.
Let $T_c, T_{1}$ be the trees obtained by Steps 1, 2.1 respectively. Let $T'$ be the final tree after we apply Steps 2.2 and 2.3.
Now we show the stretch in $T'$ of edges in $T$.

\begin{claim}\label{claim:stretch_crossedge}
    For every cross edge $(y , x)$ of $T$, $d_{T'}(y, x) \leq (1 + O(\varepsilon))\cdot d_X(y, x)$
\end{claim}
\begin{proof}
    In Step 1 we apply the algorithm in~\Cref{lmm:redirecting} directly, thus for every cross edge $(y , x)$ of $T$, there is a path in $T_c$ such that  $d_{T_c}(y, x) \leq (1 + O(\varepsilon))d_X(y, x)$. Observe that this path is preserved after Step 2 (since we apply the redirecting algorithm with $E_2 = \bigcup_i D_i$ only), thus:
    \begin{align}\label{eq:cross_stretch}
         d_{T'}(y, x) = d_{T_c}(y, x) \leq (1 + O(\varepsilon))\cdot d_X(y, x),
    \end{align}  
    as claimed.
\end{proof}

\begin{claim}\label{claim:stretch_interleveledge}
    For any inter-level edge $(w , x)$ of $T$, $d_{T'}(w, x) \leq O(d)\log{\frac{1}{\varepsilon}} \cdot d_X(w, x)$
\end{claim}
\begin{proof}
    In Step 2.1, for every inter-level edge $(w^{(j)} , x)$ of $T_1$ (also $T$), we have $d_{T_1}(w^{(j)}, x) = d_X(w^{(j)}, x)$ for $j \leq 2$, and $d_{T_1}(w^{(j)} , x) = d_{T_1}(w^{(\lfloor(j-1)/2\rfloor)}, x) + d_X(w^{(\lfloor(j-1)/2\rfloor)}, w^{(j)})$.
    By induction, suppose that $d_{T_1}(w^{(\lfloor(j-1)/2\rfloor)}, x) \leq (1 + 2\log\lfloor\frac{j-1}{2}\rfloor)\cdot d_X(w^{(\lfloor(j-1)/2\rfloor)}, x)$ for $j \geq 3$.
    Since $d_X(w^{(\lfloor(j-1)/2\rfloor)}, x) \leq d_X(w^{(j)}, x)$, we have:
    \begin{align*}
        d_{T_1}(w^{(j)} , x) &= d_{T_1}(w^{(\floor{(j-1)/2})}, x) + d_X(w^{(\floor{(j-1)/2})}, w^{(j)}) \\ 
        &\leq (1 + 2\log\lfloor\frac{j-1}{2}\rfloor)\cdot  d_X(w^{(\floor{(j-1)/2})}, x) + d_X(w^{(\floor{(j-1)/2})}, x) + d_X(w^{(j)}, x) \\
        &= (3 + 2\log\lfloor\frac{j-1}{2}\rfloor\rceil)\cdot  d_X(w^{(j)}, x)\\
        &\leq 
        (1 + 2\log{j})\cdot d_X(w^{(j)}, x)
    \end{align*}
    The vertex $x_{i+1}$ has at most $O(\varepsilon^{-d})$ children, we obtain $d_{T_1}(w^{(j)}, x) \leq  O(d)\log{\frac{1}{\varepsilon}} \cdot d_X(w^{(j)}, x)$.

    In Step 2.2, we apply the algorithm in~\Cref{lmm:redirecting} directly on inter-level edges remaining after Step 2.1. Thus, for every inter-level edge $(y , x)$ of $T_1$, there is a path in $T'$ such that: $d_{T'}(y, x) = (1 + O(\varepsilon))\cdot d_X(y, x)$.

    In Step 2.3, we apply the algorithm in~\Cref{lmm:redirecting}  on cross edges $Q_i$ created in Step 2.1. Thus for every cross edge $(y , x)$ of $T_1$, there is a path in $T'$ such that: $d_{T'}(y, x) = (1 + O(\varepsilon))\cdot d_X(y, x)$.

    From Steps 2.1, 2.2 and 2.3, for an inter-level edge $(w , x)$ of $T$, let $\pi$ be the path from $w$ to $x$ in $T_1$. 
    Every edge $(a, b) \in \pi$ is either a cross edge or an inter-level edge in $T_1$. In any case, the path from $a$ to $b$ in $T'$ has $d_{T'}(a, b) \leq (1 + O(\varepsilon)) \cdot d_X(a, b)$.
    Then: 
    \begin{equation}\label{eq:interlevel_stretch}
    \begin{aligned}
        d_{T'}(w, x) &\leq \sum_{(a, b) \in \pi} d_{T'}(a, b) \leq \sum_{(a, b) \in \pi} (1 + O(\varepsilon))\cdot d_{X}(a, b)= (1 + O(\varepsilon)) \sum_{(a, b) \in \pi} d_{X}(a, b) \\
        &= (1 + O(\varepsilon)) \cdot d_{T_1}(w, x) \leq (1 + O(\varepsilon)) \cdot O(d)\log{\frac{1}{\varepsilon}} \cdot d_X(w, x) \\
        &= O(d)\log{\frac{1}{\varepsilon}} \cdot d_X(w, x),
    \end{aligned}
    \end{equation}   
    as claimed.
\end{proof}

\begin{lemma}
    Given a tree cover $\mathcal{T}$ constructed in~\Cref{subsec:treecover_nodegbound}, let $\mathcal{T}'$ be the tree cover obtained by applying the redirecting algorithm to all trees in $\mathcal{T}$. 
    Given a tree $T \in \mathcal{T}$, for every pair of points $u, v \in X$, if $d_T(u, v) \leq (1 + O(\varepsilon))\cdot d_X(u, v)$, then the corresponding tree $T'$ of $T$ has $d_{T'}(u, v) \leq \left(1 + O(d\varepsilon\log\frac{1}{\varepsilon})\right) \cdot d_X(u, v)$.
\end{lemma}

\begin{proof} 
    Recall that for every pair of points $u, v \in X$ where $d_X(u, v) \in (\frac{\delta}{\varepsilon^{i+1}}, \frac{2\delta}{\varepsilon^{i+1}}]$, there exist $u', v' \in N_i$ such that: 1) $d_X(u, u')$ and $d_X(v, v')$ are at most $O(\frac{\delta}{\varepsilon^i})$, and 2) there exists a tree $T \in \mathcal{T}$ such that $d_T(u, v) \leq d_T(u, u') + d_T(u', v') + d_T(v', v) \leq O(\varepsilon)d_X(u, v) + d_X(u, v) + O(\varepsilon)d_X(u, v)$. Let $\Phi_i^T(u')=\lbl{u'}$, and $\Phi_i^T(v')=\lbl{v'}$.
    Observe that $(\lbl{u'}, \lbl{v'})$ is a cross edge of $T$. Consider the corresponding tree $T'$ obtained by redirecting edges of $T$. By~\Cref{claim:stretch_crossedge}:
    $$ d_{T'}(\lbl{u'}, \lbl{v'}) \leq (1 + O(\varepsilon))\cdot d_X(\lbl{u'}, \lbl{v'}) \leq (1 + O(\varepsilon)) \cdot d_X(u, v)$$
    Consider the path $\pi_{u}$ from $u$ to $u'$ in $T$, every edge $(a, b) \in \pi_u$ is either a cross edge or an inter-level edge of $T$. 
    By~\Cref{claim:stretch_crossedge} and~\Cref{claim:stretch_interleveledge}: 
    $$d_{T'}(a, b) \leq O(d)\log{\frac{1}{\varepsilon}}\cdot d_X(a, b)$$
    We obtain:
    \begin{align*}
        d_{T'}(u, \lbl{u'}) &\leq \sum_{(a,b)\in \pi_u}
        d_{T'}(a, b)
        \leq \sum_{(a,b)\in \pi_u} O(d)\log{\frac{1}{\varepsilon}}\cdot d_X(a, b) = O(d)\log{\frac{1}{\varepsilon}} \sum_{(a,b)\in \pi_u} d_X(a, b) \\
        &= O(d)\log{\frac{1}{\varepsilon}} d_T(u, \lbl{u'}) = O(\varepsilon d \log{\frac{1}{\varepsilon}}) \cdot d_X(u, v)
    \end{align*}  

    Similarly, we have:
    \begin{align*}
        d_{T'}(v, \lbl{v'}) \leq O(d)\log{\frac{1}{\varepsilon}} \cdot d_T(v, \lbl{v'}) \leq O(\varepsilon d \log{\frac{1}{\varepsilon}}) \cdot d_X(u, v)
    \end{align*} 
    Therefore:
    \begin{align*}
    d_{T'}(u, v) &\leq d_{T'}(u, \lbl{u'}) + d_{T'}(\lbl{u'}, \lbl{v'}) + d_{T'}(\lbl{v'}, v) = \left(1 + O(\varepsilon) + 2\cdot O(d\varepsilon\log{\frac{1}{\varepsilon}})\right)\cdot d_X(u, v) \\
    &= \left(1 + O(d\varepsilon\log{\frac{1}{\varepsilon}})\right) \cdot d_X(u, v)
    \end{align*}
    This completes the proof.
\end{proof}

\section{Omitted Proofs}
\label{sec:omitted_proofs}

\begin{restatable}{lemma}{Hierarchy}{}
\label{lem:hierarchy}
    In each $T\in \mathcal{T}$, vertices having the same label form a connected component.
\end{restatable}

\begin{proof}
    Let $T=T_k^p$. For a vertex with label $u$, let $l(u)$ be the highest level at which it exists in the net tree $\tau_{\delta_p}$. We first show that the vertices $u_0, u_1,\ldots, u_{l(u)-1}$ all exist in $V(T)$. A vertex $u_i$ is only relabeled when $u_{i+1}$ does not exist. Therefore, none of these vertices ever get relabeled. For some $u_i$, where $i< l(i)$, we form the edge $(u_i, u_{i-1})$ in Step 2b of the construction, no matter whether $u_i$ is matched or unmatched. At level $l(u)$, the edge $(u_{l(u)-1}, u_{l(u)})$ is added only when $u_{l(u)}\in V(T)$. Therefore, all vertices of the same label $u$ form a connected component.
\end{proof}

\EachTisTree*
\begin{proof}
    \Cref{lem:hierarchy} implies that in each $T_k^p$, vertices with the same label form a connected subgraph. The merge operation in Step 3 of the construction ensures that we have edge contractions, which implies that the resulting graph is a minor of $T_k^p$. Therefore, each $T_k^p$ remains a tree. 
\end{proof}

\EdgeWeightBounds*
\begin{proof}
    We apply induction on the levels of $T$. Assume hypothesis holds true for cross edges and inter-level edges till levels $0,1,\ldots, i-1$. 
    
    We first derive a bound on the inter-level edges. Step 2a of the construction bounds $d_T(c_i, d_{i-1})$ by $4\delta_p\eps^{-i}$ for any matched vertex $c_i$, before considering unmatched vertices. We know that any vertex is relabeled at most once in Step 2b. This is because all potential relabeled vertices of $c_i$ are within $O(\delta_p\eps^{-i})$ distance of $c$, so at most one of these is exists at level $(i+1)$.
    
    We bound $d_T(c_i, d_{i-1})$ where $c_i$ is a matched vertex. There can be multiple unmatched vertices relabeled to $c_i$. Consider $b_i$ such that $\Phi_i^T(b_i) = c_i$. For any $w_{i-1}$ such that $(b_i, w_{i-1})$ is an edge, the relabeled edge becomes $(c_i, w_{i-1})$. By covering property, $d_T(b_i, w_{i-1})\leq \delta_p\eps^{-i}$. There exists either a path $\pi_1 = (c_i,x_{i-1},y_{i-1},b_i,w_{i-1})$ or $\pi_2 = (c_i,x_{i-1},b_i,w_{i-1})$ in $T$. 
    
    Applying induction hypothesis on level-$(i-1)$ cross edge $(x_{i-1}, y_{i-1})$, we get $\text{len}(\pi_1) \leq 4\delta_p\eps^{-i} + (3/2)\delta_p\eps^{-i} + \delta_p\eps^{-i} + \delta_p\eps^{-i} \leq (15/2)\delta_p\eps^{-i}$. Similarly, $\text{len}(\pi_2)\leq 6\delta_p\eps^{-i}$. Therefore, $d_T(c_i, w_{i-1})\leq(15/2)\delta_p\eps^{-i}$. Now, $c_i$ can be relabeled one final time to some $a_i$ if $a_{i+1}$ exists. The vertex $a_i$ is unmatched before $c_i$ is relabeled. We must have $d_T(c_i, a_i)\leq d_T(c_i, p_{i-1}) + d_T(p_{i-1}, q_{i-1}) + d_T(q_{i-1}, a_i)\leq (13/2)\delta_p\eps^{-i}$. Therefore, any edge $(c_i, w_{i-1})$ is relabeled to $(a_i, w_{i-1})$, and $d_T(a_i, w_{i-1})\leq d_T(a_i, c_i) + d_T(c_i, w_{i-1})\leq (27/2)\delta_p\eps^{-i}$. This is the bound on the inter-level edge weight.
    
    Now, we derive a bound on the cross edges. Step 1 of the construction bounds any cross edge $d_T(x_i, y_i)\in [(1/3)\delta_p\eps^{-(i+1)}, (4/3)\delta_p\eps^{-(i+1)}]$. We analyze the weight of the cross edge $(a_i, b_i)$ if its end points are relabeled. We know that both $a_i, b_i$ are relabeled at most once (say, from $a'_i, b'_i$ resp.). We already show above that $d_X(a_i, a'_i)\leq (13/2)\delta_p\eps^{-i}$ and $d_X(b_i, b'_i)\leq (13/2)\delta_p\eps^{-i}$. Therefore, by triangle inequality, 
    \begin{align*}
        d_T(a_i, b_i) &= d_X(a, b) \leq d_X(a, a') + d_X(a', b') + d_X(b', b) \\
        &\leq \frac{13\delta_p}{2\eps^i}
         + \frac{4\delta_p}{3\eps^{i+1}} + \frac{13\delta_p}{2\eps^i}\leq \frac{3\delta_p}{2\eps^{i+1}} \quad \text{for $\eps < \frac{1}{78}$.}
    \end{align*}
    Similarly, we can also bound $d_T(a_i, b_i)$ by below:
    \begin{align*}
        d_T(a_i, b_i) &= d_X(a, b) \geq d_X(a', b') - d_X(a, a') - d_X(b', b) \\
        &\geq \frac{\delta_p}{3\eps^{i+1}}
         - \frac{13\delta_p}{\eps^i}\geq \frac{\delta_p}{4\eps^{i+1}} \quad \text{for $\eps < \frac{1}{156}$.}
    \end{align*}
\end{proof}

\DistLeafToAncestor*
\begin{proof}
    We apply induction on the levels of $T=T_k^p$. The base case holds true for level $0$ nodes trivially. Assume the hypothesis holds true for levels $1,2\ldots,i-1$. Consider the path $\mathcal{P}$ from a leaf node $u_0$ to its ancestor $u'_i$ in $T_k^p$. Let $u''_{i-1}$ be the last node encountered on level $i-1$ in $\mathcal{P}$. By induction hypothesis, $d_T(u_0,u''_{i-1})\leq \frac{2\delta}{\varepsilon^i}$. The path from $u''_{i-1}$ to $u'_i$ is either a direct edge or the path: $u''_{i-1} \xrightarrow{} v'_i\xrightarrow{}u'_i$. By~\Cref{clm:edge_weight_bounds}, any edge from $u''_{i-1}$ to its ancestor at level $i$ will have weight at most $(27/2)\delta_p/\varepsilon^{i}$. 
    
    Moreover, any cross edge at level $i$ has a weight at most $\frac{3\delta}{2\varepsilon^{i+1}}$. 
    Hence, 
    \begin{align*}
        d_T(u_0, u'_i) &= d_{T}(u_0, u''_{i-1}) + d_{T}(u''_{i-1}, u'_i)\\ &\leq  \frac{2\delta}{\varepsilon^i} + \left(\frac{27\delta}{2\varepsilon^i} + \frac{3\delta}{2\varepsilon^{i+1}}\right) = \frac{31\delta}{2\varepsilon^i} + \frac{3\delta}{2\varepsilon^{i+1}} \leq \frac{2\delta}{\varepsilon^{i+1}} \quad
        \text{for $\varepsilon<\frac{1}{31}.$}
    \end{align*}
\end{proof}

\FinalStretchBoundPhaseOne*
\begin{proof}
    For some integer $i$, we have $d_X(u,v) \in \left(\frac{\delta}{2\varepsilon^{i+1}}, \frac{\delta}{\varepsilon^{i+1}}\right)$ for some $\delta = 2^p$ where $p\in \{0,1,\ldots, \lfloor\log(1/\eps)\rfloor\}$
    Let $u'$ be the ancestor of $u$, and $v'$ be the ancestor of $v$ at level $i$ of the $\delta\eps^{-1}$-net tree. We know that $d_X(u, u') \leq \frac{2\delta}{\varepsilon^i}$ and $d_X(v, v')\leq \frac{2\delta}{\varepsilon^i}$ (since the distance from a vertex $u$ at level $i$ to any of its descendants is bounded by $2\delta \eps^{-i}$).
    Since $d_X(u, v) \geq \frac{\delta}{2\varepsilon^{i+1}}$, we have that $d_X(u, u') + d_X(u', v') + d_X(v', v) \geq d_X(u, v)$ (by triangle inequality). This implies that $d_X(u', v') \geq d_X(u, v) - (d_X(u, u') + d_X(v, v')) \geq \frac{\delta}{2\varepsilon^{i+1}} - \frac{4\delta}{\varepsilon^i} \geq \frac{\delta}{3\varepsilon^{i+1}}$ for small enough $\varepsilon$. Moreover, by the triangle inequality, $d_X(u', v')$ is bounded by $d_X(u', u) + d_X(u, v) + d_X(v, v')$. This implies that $d_X(u', v') \leq \frac{\delta}{\varepsilon^{i+1}} + \frac{4\delta}{\varepsilon^i} \leq \frac{4\delta}{3\varepsilon^{i+1}}$ for small $\varepsilon$.

    For ease of notation, denote $\lbl{u}'_i=\Phi_i^T(u'_i)$ and $\lbl{v}'_i=\Phi_i^T(v'_i)$. By construction, there exists a tree $T_j^p$ (call it $T$) in the tree cover which contains the cross edge $(\lbl{u}'_i, \lbl{v}'_i)$. This is because $d_X(u', v') \in \left(\frac{\delta}{3{\varepsilon}^{j+1}}, \frac{4\delta}{3\varepsilon^{j+1}}\right)$, which means that before any relabeling, $T$ contained  $(u'_i, v'_i)$.
    
    All we need to prove is that there exist paths from $u_0$ to $\lbl{u}'_i$ and from $v_0$ to $\lbl{v}'_i$ \textit{resp.} in $T$ having lengths at most $O(\frac{\delta}{\varepsilon^i})$. We will show that $\lbl{u}'_i$ (and $\lbl{v}'_i$ \textit{resp.}) is an ancestor of $u_0$ (and $v_0$ \textit{resp.}) in $T$. It is enough to show that $u_0$ has ancestor $u'_i$ just before relabeling $u'_i$ to $\lbl{u}'_i$ (call this graph $T'$), since after relabeling, any descendant of $u'_i$ becomes a descendant of $\lbl{u}'_i$.

    We look at the last node encountered at level-$(i-1)$ in the path from $u_0$ (and $v_0$) to its ancestor at level $i$ of $T'$. Let us call it $u''_{i-1}$ (and $v''_{i-1}$). We now bound $d_{T'}(u_0, u'_i)$ and the bound for $d_T(v_0, v'_i)$ will follow. By~\Cref{clm:dist_leaf_to_ancestor}, we have $d_{T'}(u_0, u''_{i-1})\leq \frac{2\delta}{\varepsilon^i}$. Now we show that there is a direct edge from $u''_{i-1}$ to $u'_i$ in $T'$. Since $T'$ is dominating, we have $d_X(u, u'') \leq d_{T'}(u_0, u''_{i-1}) \leq \frac{2\delta}{\varepsilon^i}$. By triangle inequality, 
    \[d_X(u'', u') \leq d_X(u'', u) + d_X(u, u') \leq \frac{2\delta}{\varepsilon^i} + \frac{2\delta}{\varepsilon^i} = \frac{4\delta}{\varepsilon^i}.\] 
    By contradiction, assume that $u''_{i-1}$ has a different parent $w'_i$. The node $w'_i$ must be an end point of a level-$i$ cross edge, since we added $(w'_i, u''_{i-1})$ before considering $u'_i$. By construction, $d_X(w', u'') = d_{T'}(w'_i, u''_{i-1})\leq \frac{4\delta}{\varepsilon^i}$. By triangle inequality, $d_X(u', w') \leq d_X(u', u'') + d_X(u'', w') \leq \frac{4\delta}{\varepsilon^i} + \frac{4\delta}{\varepsilon^i} = \frac{8\delta}{\varepsilon^i}$. However, by construction, since $d_X(u', w') \leq \frac{14\delta}{\varepsilon^i}$, both $u'_i$ and $w'_i$ do not simultaneously have a cross edge in $T'$. Hence, it must be that there is an edge $(u''_{i-1}, u'_i)$ in $T'$. Now we shift our focus back on $T$ after relabeling $u'_i$ to $\lbl{u}'_i$. By~\Cref{clm:edge_weight_bounds}, $d_{T}(u''_{i-1}, \lbl{u}'_i)\leq \frac{27\delta}{2\varepsilon^i}$. By triangle inequality, this will bound 
    \[d_{T}(u_0, \lbl{u}'_i) \leq d_{T}(u_0, u''_{i-1}) + d_{T}(u''_{i-1}, \lbl{u}'_i)
        \leq \frac{2\delta}{\varepsilon^i} + \frac{27\delta}{2\varepsilon^i} = \frac{31\delta}{2\varepsilon^i}.\]
    Similarly, $d_{T}(v_0, \lbl{v}'_i) \leq \frac{31\delta}{2\varepsilon^i}$. Putting it all together, 
    \begin{align*}
        d_{T}(u, v) &\leq d_{T}(u_0, v_0) \quad\text{(distances do not increase after contracting edges)}\\
        &\leq d_{T}(u_0, \lbl{u}'_i) + d_{T}(\lbl{u}'_i, \lbl{v}'_i) + d_{T}(\lbl{v}'_i, v_0) \\
        &\leq \frac{31\delta}{2\varepsilon^i} + d_X(\lbl{u}', \lbl{v}') + \frac{31\delta}{2\varepsilon^i} \hspace{0.5in}\\
        &\leq \frac{31\delta}{\varepsilon^i} + (d_X(\lbl{u}', u) + d_X(u, v) + d_X(v, \lbl{v}')) \quad\text{(by triangle inequality)}\\
        &\leq \frac{31\delta}{\varepsilon^i} + \left(d_X(u,v)+\frac{31\delta}{\varepsilon^i}\right)
        \leq 124\varepsilon \cdot d_X(u, v) + d_X(u, v) \quad \left(d_X(u, v)\geq \frac{\delta}{2\varepsilon^{i+1}}\right)\\
        &= (1+124\varepsilon)\cdot d_X(u, v).
    \end{align*} 
    We can scale $\varepsilon$ by a constant factor to achieve the required $(1+\varepsilon)$ bound.
\end{proof}

\end{document}